\documentclass[mnsc,nonblindrev]{informs3_hide}
\OneAndAHalfSpacedXI % default for MS

  \addtolength{\textheight}{0.12cm}
  \addtolength{\voffset}{-0.06cm}
  % \addtolength{\textwidth}{0.15cm}
  % \addtolength{\hoffset}{-0.08cm}

% \addtolength{\textwidth}{0.6cm}
% \addtolength{\hoffset}{-0.3cm}
% \addtolength{\textheight}{0.9cm}
% \addtolength{\voffset}{-0.45cm}
% FOR OVERLEAF
\usepackage[english]{babel}
\usepackage[autostyle, english = american]{csquotes}
\MakeOuterQuote{"}
\usepackage[colorlinks,linkcolor=blue, citecolor=blue]{hyperref}

\input{paper/macros}

% For new submissions, leave this number blank.
% For revisions, input the manuscript number assigned by the on-line
% system along with a suffix ".Rx" where x is the revision number.
% \MANUSCRIPTNO{IJOC-0001-2024.00}

%%%%%%%%%%%%%%%%
\begin{document}
%%%%%%%%%%%%%%%%

% Outcomment only when entries are known. Otherwise leave as is and
%   default values will be used.
%\setcounter{page}{1}
%\VOLUME{00}%
%\NO{0}%
%\MONTH{Xxxxx}% (month or a similar seasonal id)
%\YEAR{0000}% e.g., 2005
%\FIRSTPAGE{000}%
%\LASTPAGE{000}%
%\SHORTYEAR{00}% shortened year (two-digit)
%\ISSUE{0000} %
%\LONGFIRSTPAGE{0001} %
%\DOI{10.1287/xxxx.0000.0000}%

% Author's names for the running heads
% Sample depending on the number of authors;
% \RUNAUTHOR{Jones}
% \RUNAUTHOR{Jones and Wilson}
% \RUNAUTHOR{Jones, Miller, and Wilson}
% \RUNAUTHOR{Jones et al.} % for four or more authors
% Enter authors following the given pattern:
%\RUNAUTHOR{}
\RUNAUTHOR{Wang and Jiang}

% Title or shortened title suitable for running heads. Sample:
% \RUNTITLE{Predictive Maintenance in Manufacturing}
% Enter the (shortened) title:
\RUNTITLE{Adaptive Bidding in First-Price Auctions}

% Full title. Sample:
% \TITLE{Optimal Resource Allocation in Humanitarian Logistics: A Stochastic Programming Approach}
% Enter the full title:
\TITLE{Adaptive Bidding Policies for First-Price Auctions with Budget Constraints under Non-stationarity}
%\TITLE{Budget-Constrained Adaptive Bidding in First-Price Auctions under Non-stationarity}

% Block of authors and their affiliations starts here:
% NOTE: Authors with same affiliation, if the order of authors allows,
%   should be entered in ONE field, separated by a comma.
%   \EMAIL field can be repeated if more than one author
\ARTICLEAUTHORS{%
\AUTHOR{$\text{Yige Wang}^{\dag}$, $\text{Jiashuo Jiang}^{\dag}$}

\AFF{\  \\
$\dag~$Department of Industrial Engineering \& Decision Analytics, Hong Kong University of Science and Technology
}
% Enter all authors
}

\ABSTRACT{%
% Enter your abstract
In this paper, we study how a \emph{budget-constrained} bidder should learn to bid adaptively in repeated first-price auctions to maximize cumulative payoff. This problem arises from the recent industry-wide shift from second-price auctions to first-price auctions in display advertising, which renders truthful bidding suboptimal.
We propose a simple dual-gradient-descent-based bidding policy that maintains a dual variable for the budget constraint as the bidder consumes the budget. We analyze two settings based on the bidder’s knowledge of future private values: (i) an uninformative setting where all distributional knowledge (potentially non-stationary) is entirely unknown, and (ii) an informative setting where a prediction of budget allocation is available in advance.
We characterize the performance loss (\emph{regret}) relative to an optimal policy with complete information. For uninformative setting, we show that the regret is $\tilde{O}(\sqrt{T})$ plus a Wasserstein-based variation term capturing non-stationarity, which is order-optimal. In the informative setting, the variation term can be eliminated using predictions, yielding a regret of $\tilde{O}(\sqrt{T})$ plus the prediction error.
Furthermore, we go beyond the global budget constraint by introducing a refined benchmark based on a per-period budget allocation plan, achieving exactly $\tilde{O}(\sqrt{T})$ regret. We also establish robustness guarantees when the baseline policy deviates from the planned allocation, covering both ideal and adversarial deviations.
}%

%\FUNDING{This research was supported by [grant number, funding agency].}

%Supplemental Material:
%Data Ethics & Reproducibility Note:

% Sample
%\KEYWORDS{Stochastic programming, Decision support,Uncertainty, Disaster response, Optimization}

% Fill in data. If unknown, outcomment the field
\KEYWORDS{Online learning, First price auction, Budget allocation} 
%\HISTORY{Received: Month DD, YYYY; Accepted: Month DD, YYYY; Published Online: Month DD, YYYY}

\maketitle
%%%%%%%%%%%%%%%%%%%%%%%%%%%%%%%%%%%%%%%%%%%%%%%%%%%%%%%%%%%%%%%%%%%%%%

% Text of your paper here
\section{Introduction}\label{sec:intro}
With the accelerating proliferation of e-commerce sweeping across industries~\citep{khan2016electronic,kim2017cross,hallikainen2018national, faraoni2019exploring,wagner2020online}, digital advertising has become the predominant marketing force in the economy. In 2019, U.S. businesses alone spent over 129 billion dollars on digital advertising, surpassing total spending on traditional advertising channels for the first time by 20 billion dollars. This growth has been fueled by continued advancements in the e-commerce ecosystem, including improvements in warehouse efficiency~\citep{boysen2019warehousing}, delivery logistics~\citep{lim2018consumer}, and e-payment systems~\citep{kabir2015adoption}, and the upward trend has persisted in recent years~\citep{news4}. In contrast, traditional advertising spending continues to decline~\citep{news1}.

In this backdrop, the core step that generates revenue for the digital advertising industry is online ad auctions, which are run and completed automatically (usually within 0.5 seconds~\citep{news8}) each time before an ad is served. Within this brief window, three main entities interact: (i) publishers (sellers), who provide content and sell ad space or impression opportunities via auctions; (ii) advertisers (bidders), who bid on these opportunities to promote products, services or causes; and (iii) ad exchanges, which serve as the platforms where auctions are conducted.

Historically, due to its truthful nature—bidding one's true private value being a dominant strategy—the second-price auction\footnote{In a second-price auction, the highest bidder wins the auction but only pays the second-highest bid. } (also known as the Vickrey auction~\citep{vickrey1961counterspeculation}) was widely adopted in online advertising~\citep{lucking2000vickrey,klemperer2004auctions,lucking2007pennies}.
However, recently, there has been an industry-wide shift from second-price auctions to first-price auctions\footnote{In a first-price auction, the highest bidder wins the auction and pays for the highest price bidded. First-price auctions have been the norm in several more traditional settings, including the mussels auctions~\citep{van2001sealed}; see also~\citep{esponda2008information} for more discussion.} in selling display ads (i.e., a wide range of ads, often made up of texts, images, or video segments that encourage the user to click through to a landing page and take some purchase actions), which account for 54\% of the digital advertising market share\footnote{The remaining market share is dominated by search ads, which, at this point, are still exchanged between publishers and advertisers via second-price auctions, although this could change in the future, too.}~\citep{despotakis2021first}. This is a percentage that has seen continued growth ``fueled by the upswing in mobile browsing, social media activities, video ad formats, and the developments in targeting technology''~\citep{choi2020online}.

%Such shift to first-price auctions has occurred for several reasons. The first is a larger revenue in first-price auctions for the seller and hence the exchange (which charges a percentage of the winning bid)~\citep{news2}. The second is that there is no last-look advantage for exchanges: Under second-price auctions, an ad exchange can examine all the submitted bids and \textit{then} raise the floor price (the minimum price needed to pay to win the auction) to above the second-highest bid and obtain a larger revenue~\citep{news3}. At the extreme, raising the floor price to barely under the highest bid effectively turns it into a first-price auction without the bidders being aware. The third is fairness for bidders: A seller would sometimes sell an advertising slot on different exchanges and take the highest winning bid across the exchanges (this is also known as \emph{header bidding})~\citep{news10}. Consequently, under second-price auctions, given that each exchange only reports the winning bid (the second-highest bid), a winner on one exchange -- who bids lower than another winner on a different exchange -- may end up winning the final bid, a problem that will not occur under first-price auctions.

In response to this trend, several major ad exchanges—including AppNexus, Index Exchange, and OpenX—began transitioning to first-price auctions in 2017, completing the shift by 2018~\citep{Exchange, news7}. In addition, under sustained criticism of leveraging last-look advantage in second-price auctions, Google Ad Manager fully adopted first-price auctions by the end of 2019.
% ~\citep{Google2}
The platform also introduced enhanced transparency
%\footnote{This is likely an effort to offset the previous negative image, although there was no mention of this in Google's official language.} 
in its new first-price auction design, allowing bidders to observe the minimum bid required to win after each auction.
%\footnote{Whereas in many other ad exchanges, such information is not provided~\citep{Google2}}.
Situated in this background, an important question arises: How should a bidder adaptively bid in repeated first-price auctions to maximize its cumulative payoffs, especially when the environment is non-stationary?

\subsection{Problem Formulation}
We consider a bidder with an initial budget $B < \infty$ bidding sequentially in $T$ first-price auctions. Specifically, in each period $t \in [T]$, an indivisible good is auctioned. The bidder first receives a private value $v_t \in [a, b]$ with $0 < a < b < \infty$ for the good and then bids a price $x_t \in [a,b \bigwedge B_t]$ based on her private value and past observations, where $B_t$ denotes the remaining budget at the beginning of period $t$ with $B_1 = B$.
Let $m_t \in [a,b]$ denote the maximum bid of all the other bidders in period $t$. The outcome in period $t$ is determined as follows: If the bidder bids the highest, i.e., $x_t \geq m_t$, she wins the auction, obtains the good, and pays her bid $x_t$; on the other hand, if $x_t < m_t$, she loses the auction, pays zero, and does not obtain the good.
Consequently, the instantaneous reward of the bidder is
\begin{equation*}
r(x_t, v_t, m_t) \triangleq (v_t - x_t)\mathbf{1}[x_t \geq m_t]
\end{equation*}
and she pays $z_t \triangleq x_t \mathbf{1}[x_t \geq m_t]$ in period $t$. The remaining budget then becomes $B_{t+1} = B_t - z_t$, with which the bidder joins the next auction.

\begin{remark}[Assumption on the Ranges]
In the above, we assume that the private values $(v_t)_{t\in [T]}$, the bids $(x_t)_{t\in [T]}$, and the highest competitor bids $(m_t)_{t\in [T]}$ lie on the range of $[a,b]$ with $0 < a < b < \infty$.
We can interpret the value $a > 0$ as a reserve price set by the platform and $b < \infty$ as the highest value of the good perceived by the bidders.
% Assuming that the lower bound $a$ is strictly positive simplifies our analysis; on the other hand, our results continue to hold with $a = 0$.
\end{remark}

\noindent\textbf{Competitors' Bids.} We assume that the maximum value of the competitors' bids $m_t$ are i.i.d. drawn from an unknown cumulative distribution function (CDF) $G(\cdot)$, that is, $G(x) = \p(m_t \leq x)$.
Hence, the expected reward of the bidder from bidding $x_t$ in period $t$ is
\begin{equation*}
r(x_t, v_t) \triangleq \E_{m_t}[r(x_t, v_t, m_t)] = (v_t - x_t)G(x_t).
\end{equation*}
The above stationary competition assumption is reasonable when there is a large number of bidders, and their valuations and bidding strategies are on average stationary over time and, in particular, independent of the specific bidder's private valuation (see e.g., \citealt{iyer2014mean} and \citealt{balseiro2015repeated}).
Finally, we remark that we do not make any assumptions on the smoothness or shape of the distribution $G(\cdot)$; for example, $m_t$ can be either continuous or discrete.

\noindent\textbf{Full-Information Feedback.}
We consider the \emph{full-information-feedback} setting in our model, where the highest competitor bid $m_t$ is always revealed at the end of an auction $t$. This full-information feedback assumption holds in practical first-price auctions, e.g., in Google Ad Manager, and it is a starting point for considering other feedback structures in future.

\noindent\textbf{The Private Values.}
We assume that the bidder's private values are stochastic and possibly non-stationary over time. Specifically, each private value $v_t$ in auction $t$ is independently drawn from a CDF $F_t(\cdot)$.
%(which can be a point-mass distribution that has a singleton support).
In the following, we will consider both an uninformative setting where the private-value distributions are entirely unknown to the bidder (\Cref{sec:unknown_F}) and an informative setting where the bidder has access to a prediction of per-period budget allocation rather than direct knowledge of $F_t(\cdot)$ (\Cref{sec:known_F}).
In practice, advertisers often rely on such budget allocation plans—derived from historical data or campaign forecasts—to guide bidding. These plans are incorporated into our framework in \Cref{discussion}, where we analyze the algorithm’s ability to follow a given plan and its robustness to deviations.

\noindent\textbf{Performance Measure.}
Let $\Pi$ denote the set of all non-anticipative bidding policies. The bid $x_t$ in auction $t$ depends only on the private value $v_t$ in the current period $t$ and the historical information (previous bids $\{x_s\}_{s \leq t - 1}$, private values $\{v_s\}_{s \leq t - 1}$, and competitor bids $\{m_s\}_{s \leq t - 1}$). The expected cumulative reward $\Vpi$ of a policy $\pi \in \Pi$ can be expressed as
\begin{equation*}
\Vpi \triangleq \E\Bigg[\sum_{t\in [T]} r(x_t^\pi,v_t)\Big] = \E\Big[\sum_{t\in [T]} (v_t - x_t^\pi)G(x_t^\pi)\Bigg]
\end{equation*}
where $x_t^\pi$ is the bid in auction $t$ under the policy $\pi$, and the expectation is taken over the private values $v_t \sim F_t$ for all $t \in [T]$ and the possible randomness of policy $\pi$.

The benchmark we compare with in our analysis is the performance of an optimal bidding policy that has complete information of the competitor-bid and private-value distributions $G(\cdot)$ and $F_t(\cdot)$. We denote the following optimization problem:
\begin{equation*}
\begin{alignedat}{2}
\VOPT(\bm{\gamma}) \triangleq \ & \underset{\pi \in \Pi_0}{\text{max}}
& \ &  \sum_{t\in [T]} \left(v_t - x_t^\pi\right) \mathbf{1}[x_t^\pi \geq m_t]  \\
& \text{s.t.}
& \ & \sum_{t \in [T]} x_t^\pi \mathbf{1}[x_t^\pi \geq m_t] \leq B,
\end{alignedat}
\end{equation*}
where $\bm{\gamma}=(\gamma_1,\dots,\gamma_T)$ with $\gamma_t=(v_t, m_t)$ denoting the arrival sequence. Then, $\Pi_0 \supseteq \Pi$ is the set of non-anticipative bidding policies that bid $x_t$ in auction $t$ based not only on the private value $v_t$ in period $t$ and all the historical observations (same as policies in the set $\Pi$) but also on the knowledge of the distributions $G(\cdot)$ and $F_t(\cdot)$.
We take expectation of the above problem as our benchmark:
\begin{equation}\label{eq:offline}
\begin{alignedat}{2}
\VOPT \triangleq \ & \underset{\pi \in \Pi_0}{\text{max}}
& \ &  \E \Bigg[\sum_{t\in [T]} \left(v_t - x_t^\pi\right) G(x_t^\pi)\Bigg]  \\
& \text{s.t.}
& \ & \sum_{t \in [T]} x_t^\pi G(x_t^\pi) \leq B,
\end{alignedat}
\end{equation}
Therefore we define the performance loss (\emph{regret}) of a bidding policy $\pi \in \Pi$ as the difference between its expected cumulative rewards $\Vpi$ and the benchmark $\VOPT$, i.e., $R_T(\pi) \triangleq \VOPT - \Vpi$.
The objective is to design a non-anticipative bidding policy $\pi \in \Pi$ to minimize the regret
given any unknown distributions of the competitor bids and/or private values.

\subsection{Main Results and Contributions}
Our main results are to develop online policies for budget-constrained first-price auctions with sublinear regret under non-stationarity, where the private-value distributions $F_t$ can vary over time.

We first consider an uninformative case where the sequence of private-value distributions is arbitrary and unknown (\Cref{sec:unknown_F}). We propose a dual-gradient-descent bidding policy (\Cref{alg:1}) and show that it achieves a regret of $\tilde{O}(\sqrt{T} + \mathcal{W}_T)$, where $\mathcal{W}_T$ is a Wasserstein-based measure of non-stationarity.
When the private values are {i.i.d.}, we have$\mathcal{W}_T = 0$, and our bound recovers the minimax optimal $\tilde{\Theta}{(\sqrt{T}})$ regret established in prior work for stationary environments \citep{ai2022no, wang2023learning}. More importantly, by employing the Wasserstein distance, we provide a tighter and more interpretable characterization of distributional shifts than traditional metrics such as total variation or KL divergence.
To our best knowledge, this is \emph{the first time that the Wasserstein distance is used to measure the deviations of the private-value distributions in online bidding}, representing a novel modeling contribution.
%Two questions arising are i) what should be a good measure of deviation over the private-value
%distributions; ii) how to design online policy with both unknown private-value distributions and unknown competitor's bidding distribution.
The proposed algorithm performs gradient descent in the dual space while concurrently learning the competitor’s bid distribution, and we prove that its regret is order-optimal in both the time horizon $T$ and the deviation measure $\mathcal{W}_T$.

%During the final preparation stage of this manuscript, we came across the archived manuscript of \cite{ai2022no}, which studies a similar problem as ours and assumes that the private values are {i.i.d.}. However, their objective is the sum of discounted rewards over a horizon $T$, and using a different learning algorithm, they obtain a regret bound $\tilde{O}\big(\frac{1}{(1-\gamma)^2}\sqrt{T}\big)$, where $\gamma$ is the discounting factor. Note that applying their result to our problem (by setting $\gamma = 1$) yields infinite regret.

We then consider an informative case in which a prediction of per-period budget allocation is available (\Cref{sec:known_F}). 
%Note that we still assume the private value distributions are unknown. However, as shown in the previous uninformative case, when the private value distributions are arbitrarily non-stationary, it is impossible to obtain any sublinear regret. Therefore, we need additional information to obtain sublinear regret. Though the private value distributions are unknown, the non-stationarity can be reflected by the budget allocation over different periods. If the budget is allocated optimally, then a $\tilde{O}(\sqrt{T})$ regret can be obtained by our algorithm. 
Although the private-value distributions remain unknown, the algorithm can leverage this prediction to eliminate the non-stationarity penalty, achieving a regret of $\tilde{O}(\sqrt{T} + V_T)$, where $V_T$ denotes the prediction error. Such predictions are commonly obtained from historical data and have been widely studied in online learning \citep{lyu2023bandits}.

While the global budget constraint used in the above settings follows the standard formulation in the literature \citep{balseiro2019learning, ai2022no, wang2023learning}, it does not reflect the practical reality that advertisers are often held to periodic spending targets. This motivates a key question: can an algorithmic framework be designed to follow a prescribed budget allocation plan, and how does performance degrade when the benchmark itself is allowed to deviate from the plan?

To address this, we introduce in \Cref{discussion} a novel benchmark that enforces per-period expected expenditure constraints based on a given allocation plan. To the best of our knowledge, this is \emph{the first such plan-based benchmark in the context of first-price auctions}.
This new formulation yields several non-trivial theoretical advances. First, it decouples spending discipline from prediction accuracy, eliminating the dependence on $V_T$ and achieve a clean $\tilde{O}(\sqrt{T})$ regret bound. Second, it provides a diagnostic framework that isolates the regret due to misalignment with allocation plan, as opposed to aggregate budget overruns. Third, it enables a robustness analysis that is infeasible under a global budget constraint: we characterize how the algorithm’s regret scales with the total allowable violation when the benchmark itself is permitted to deviate from the plan, covering both ideal and adversarial settings. These contributions go beyond prior work. 
%Most notably, \cite{stradi2025no} studied general online resource allocation and learning problems under adversarial resource constraints with a spending plan. Against the per-period constraint benchmark, their analysis yields a regret bound of $\tilde{O}(\frac{1}{\rhomin}\sqrt{T})$ where $\rhomin$ is the minimum per-period budget allocation. Specifically, when $\rhomin$ is arbitrarily small, their approach relies on a separate meta-procedure and attains a regret of $\tilde{O}(T^{3/4})$. 
%In contrast, our work specifically focuses on first-price auction setting, where we design a tailored dual-gradient algorithm that directly achieves a $\tilde{O}(\sqrt{T})$ regret without dependence on $\rhomin$.
%In contrast, we design a tailored method to set a threshold on the initial dual variable that directly achieves a $\tilde{O}(\sqrt{T})$ regret without dependence on $\rho_{\min}$, which can handle first-price auctions.
%Furthermore, our per-period benchmark and robustness analysis are developed explicitly for this auction environment, allowing us to isolate the impact of plan misalignment without incurring additional factors from general resource constraints. 
Consequently, our results are not a simple extension of existing methods but rather a significant step forward in bridging theoretical bidding strategies with the practical spending discipline requirements faced by advertisers in modern auction platforms.
  
Finally, we validate our theoretical findings through numerical experiments.The results demonstrate how regret degrades with increasing non-stationarity and how predictive budget allocations can substantially improve performance, consistent with our analytical bounds.

%Taking one step further, we show that the regret bound can be further improved if more information is available. In particular, when the sequence of (possibly non-stationary) private-value distributions is known at the beginning (as they can be learned from past data, if available), a natural adaptation of our algorithm achieves the improved regret bound of $\tilde{O}(\sqrt{T})$. At every round, this adapted algorithm computes a pre-allocated budget based on the knowledge of the private-value distributions and the online learning of the competitor bidder’s distribution. This new information enables us to remove the deviation measure term in the final regret bound and, thus reveals the value of knowledge of the private-value distributions.
%\item Unknown distributions but with average permutation performance metric: also improves to $O(\sqrt{T})$

\subsection{Related Literature}

The breakdown of truthfulness in first-price auctions introduces significant complexity for bidders. While the existing auction literature touches on related themes, it falls short of addressing the adaptive, repeated nature of the problem. For instance, classical game-theoretic approaches assume a Bayesian framework in which each bidder has distributional knowledge of others' private values, allowing for the derivation of Nash equilibria under strategic interaction~\citep{wilson1969communications,myerson1981optimal,riley1981optimal,wilson1985game}. Despite its theoretical elegance, two significant shortcomings render the approach inapplicable: First, a bidder in practice has little information about competitors' private values—indeed, even accurately learning one's own value is nontrivial. Second, the classical model is designed for one-shot bidding\footnote{Naturally so, because the classical game-theoretical approach is motivated by the traditional single-auction setting, such as mussels auctions~\citep{van2001sealed}, rather than the repeated online display ads auctions studied here.} and hence cannot incorporate any past information to guide future bidding strategies. 
Motivated by these drawbacks, an online decision-making approach has emerged recently, where an auction participant does not need to model other bidders' private values and is allowed to make decisions adaptively. However, this emerging literature has largely focused on second-price auctions, often from the seller's perspective—for example, learning optimal reserve prices~\citep{mohri2014learning,cesa2014regret,roughgarden2019minimizing,haoyu2020online}—or on bidding with uncertain private values~\citep{mcafee2011design,weed2016online}. Closer to our setting, \cite{balseiro2019learning} studied the problem of bidding in repeated second-price auctions with budget constraints, showing how to shade bids optimally and designing a dual-gradient-descent-based policy.

As such, how to adaptively bid in repeated first-price auctions-which has become more pressing and relevant than ever-has yet to be explored. In fact, since transitions to first-price auctions occurred, an effective heuristic has yet to be developed satisfactorily by the bona fide bidders in the industry. In addition, there was a lack of intellectual framework for principled adaptive bidding methodologies. As documented in a report by the ad exchange AppNexus in 2018, ``the available evidence suggests that many large buyers have yet to adjust their bidding behavior for first-price auctions''~\citep{news7}. Consequently, many bidders continue to simply bid their private values, leading to substantially increased spending post-transition.

\noindent\textbf{Adaptive Bidding in First-Price Auctions without Budget Constraints}.
Previous works are divided mainly based on the types of observable feedback provided by an ad exchange:\footnote{Different ad exchanges adopt different policies on what feedback to provide to the participating bidders. Our view is that the general industry trend is shifting towards full-information feedback, partly because Google, as a large player, has taken the first step towards more transparency.}
(1) \textit{binary feedback}, where a bidder only observes whether she wins the auction or not;
(2) \textit{winning-bid-only feedback}, where the exchange posts the winning bid to all bidders;
(3) \textit{full-information feedback}, where a bidder always observes the minimum bid to win.
% Note that this is equivalent to observing $m_t$ because if the bidder wins the auction, then the minimum-bid-to-win is the second highest bid. On the other hand, if the bidder loses the auction, then the wining price is the highest bid.

%Note that binary feedback provides the least amount of information, followed by winning-bid only feedback, which in turn provides less information than full-information feedback\footnote{Most ad exchanges would like to give away as little information as possible. The very few ad exchanges who do provide feedback that are more transparent are believed to have done so under regulatory pressure in the overall digital advertising business. For instance, the ad exchange of Google is one of the very few ad exchanges that provides the full-information feedback. Note that as already alluded to before, Google has recently come under close scrutiny from several congressional committees.}

%Learning to bid under these three types of feedback have been studied by~\cite{NEURIPS2019_6aadca7b, han2020optimal, han2020learning} respectively.

In particular, ~\cite{balseiro2021contextual} studied the binary feedback setting and show that: (i) if the highest bid of the other bidders $m_t$ is drawn i.i.d. from an underlying distribution (with a generic CDF), then one achieves the minimax optimal regret of $\tilde{\Theta}(T^{\frac{2}{3}})$; (ii) if $m_t$
is adversarial, then one achieves the minimax optimal regret of $\tilde{\Theta}(T^{\frac{3}{4}})$.
Subsequently, \cite{han2024optimal} considered the winning-bid-only feedback and established that
if $m_t$ is drawn~i.i.d. from an underlying distribution (with a generic CDF), one can achieve the minimax optimal regret of $\tilde{\Theta}(T^{\frac{1}{2}})$. Although it remains unknown what the result would be when $m_t$ is adversarial under winning-bid-only feedback, \cite{han2020learning} studied the full-information feedback setting and showed that the minimax optimal regret of $\tilde{\Theta}(T^{\frac{1}{2}})$ can be achieved when $m_t$ is adversarial.\footnote{Note that under both full-information feedback and i.i.d. $m_t$, a pure exploitation algorithm already achieves the minimax optimal regret $\Theta(\sqrt{T})$.}
\cite{zhang2021meow} also studied the full-information feedback setting, where they designed and implemented a space-efficient variant of the algorithm proposed in~\cite{han2020learning} and showed that their algorithmic variant is quite effective through empirical evaluations. \cite{DBLP:journals/corr/abs-2109-03173} further modeled $m_t$ as a linear function of the underlying auction features and studied both binary and full-information feedback. Recently, \cite{sadoune2024algorithmic} introduced a theoretical model called the Minimum Price Markov Game (MPMG) to approximates real-world first-price markets following the minimum price rule.

%There are further related work on online learning for auctions. For example, \cite{weed2016online} considered repeated Vickrey auctions with bandit feedback and achieved sublinear regret comparing to best fixed policy but it’s a second-price auction problem. \cite{feng2018learning} studied first-price auction with outcome-based feedback and designed an online algorithm that outperforms the bandit algorithm. However, their results depend on the size of decision set which means the sublinear regret will no longer hold in continuous decision space. Recently, \cite{cesa2024role} focused on repeated first-price auctions with unknown valuations and offered an overview of the minimax regret under different settings. As they showed in adversarial settings, without smoothness, sublinear regret cannot be achieved even with full feedback, which means the prior information of value in our settings is necessary. Latest work by \cite{han2024optimal} studied winning-bid-only feedback where a bidder can only observe the winning bid at the end of each auction and presented algorithms based on stochastic contextual bandits with graph feedback to achieve sublinear regret. Our formulation is almost the same with \cite{han2024optimal}, but we have designed a different primal dual algorithm and achieved a more strict upper bound for the regret.

\noindent\textbf{Importance of Budget Constraint}.
All the aforementioned studies focus on maximizing cumulative surplus without considering budget limits—an assumption that limits practical applicability. In practice, an advertiser typically has a fixed budget to spend on ads and would entrust a demand-side platform (that bids on the advertiser's behalf) with a pre-specified budget and bidding period. This budget constraint immediately introduces new challenges: Without the budget constraint (i.e., in the pure surplus maximization formulation), the bidder should always try to win an auction to increase surplus so long as the bid is less than the private value. However, with budget constraint, one needs to be prudent about which one auction to win since the bidder would not want to waste money on an auction that only has a small surplus but consumes a large budget. While there is a body of work on budget-constrained first-price auctions from a game-theoretic perspective (\citealt{kotowski2020first}, \citealt{balseiro2021contextual}, \citealt{che1998standard}, \citealt{che1996expected}), these studies focus on equilibrium characterization rather than online learning. This paper addresses the open question of whether a bidder can learn to bid adaptively in repeated first-price auctions under a budget constraint, and we answer it affirmatively in a non-stationary environment.

\noindent\textbf{Bandits with Knapsacks}.
Our problem can be viewed as an instance of the bandits with knapsacks (BwK) framework, e.g. \cite{badanidiyuru2018bandits}. The adversarial bandits with knapsacks problem have been studied in \cite{immorlica2022adversarial} and an algorithm with competitive ratio $O(\log T)$ has been derived, with respect to the best-fixed distribution over actions. \cite{castiglioni2022online} further improved these results for both stochastic and adversarial settings, with better competitive ratio compared with previous work. While BwK formulations can be applied to first-price auctions (see Section 8.3 of \cite{castiglioni2022online}), they typically assume a finite and discrete action space. \cite{liu2022non} extended BwK to non-stationary environments but also restricted to finite arms. In contrast, we allow general decision space. Moreover, the above-mentioned works for bandits with knapsacks problems compare against a static benchmark, which makes a homogeneous decision over the entire horizon. Instead, we compare against a dynamic benchmark, which is allowed to make a non-homogeneous decision over the horizon.

\section{Dual Reformulation and Algorithm Design}\label{sec:main_alg}

To calculate the optimal bidding price and offer a theoretical upper bound for the regret of our policy, we introduce the Lagrangian dual problem and our main algorithm in this section.

\subsection{Lagrangian Dual Formulation}
We relax the budget constraint in (\ref{eq:offline}) with a non-negative Lagrangian dual variable $\mu$ and design the following Lagrangian dual problem: 
\begin{equation}\label{eq:offlineLR}
\begin{aligned}
    & \VLR(\mu) \\
    = \ & \mu B + \underset{\pi}{\text{max}}\sum_{t\in [T]} \mathbb{E}\Big[  \Big(v_t - (1 + \mu) x_t^\pi\Big) \cdot G(x_t^\pi) \Big].
\end{aligned}
\end{equation}

Since every feasible policy to (\ref{eq:offline}) is feasible to (\ref{eq:offlineLR}) and attains an objective that is no smaller, 
the value of $\VLR(\mu)$ is always greater than $\VOPT$ for any $\mu \geq 0$. We formally state the following Lemma:
\begin{lemma}[Weak Duality]\label{lemma:weak-duality}
For any $\mu \geq 0$, we have $\VLR(\mu) \geq \VOPT$.
\end{lemma}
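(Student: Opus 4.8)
The plan is to establish the inequality via the standard Lagrangian weak-duality argument: show that every policy feasible for the constrained benchmark \ref{eq:offline} is also admissible in the unconstrained maximization defining $\VLR(\mu)$, and that its Lagrangian objective dominates its primal reward whenever $\mu \geq 0$.

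First I would fix an arbitrary $\mu \geq 0$ and let $\pi$ be any policy feasible for \ref{eq:offline}, so that (reading the budget constraint in its expected form, equivalently by taking expectations of the pathwise constraint) its expected expenditure satisfies $\sum_{t\in[T]} \E[x_t^\pi G(x_t^\pi)] \leq B$. The key algebraic step is to rewrite the Lagrangian objective evaluated at this $\pi$ by splitting off the multiplier:
\[
\mu B + \sum_{t\in[T]} \E\big[(v_t - (1+\mu)x_t^\pi) G(x_t^\pi)\big] = \sum_{t\in[T]} \E\big[(v_t - x_t^\pi) G(x_t^\pi)\big] + \mu\Big(B - \sum_{t\in[T]} \E[x_t^\pi G(x_t^\pi)]\Big).
\]
Because $\mu \geq 0$ and the parenthesized budget slack is non-negative by feasibility of $\pi$, the right-hand side is at least $\sum_{t\in[T]} \E[(v_t - x_t^\pi) G(x_t^\pi)]$, i.e., the primal reward of $\pi$.

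Next I would invoke the fact that $\VLR(\mu)$ is obtained by maximizing the same Lagrangian objective over a policy class that contains $\pi$: the inner maximization in \ref{eq:offlineLR} is unconstrained and hence includes every feasible policy of \ref{eq:offline}. Thus $\VLR(\mu)$ is no smaller than the Lagrangian objective evaluated at this particular $\pi$, which in turn is no smaller than the primal reward of $\pi$. Since $\pi$ was an arbitrary feasible policy, taking the supremum over all such $\pi$ yields $\VLR(\mu) \geq \VOPT$, uniformly in $\mu \geq 0$.

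I do not expect a genuine obstacle here; this is textbook weak duality, and the author's informal remark preceding the statement already captures the argument. The only point requiring care is the bookkeeping around the budget constraint: one must read the constraint in \ref{eq:offline} in its expected form so that the slack term $B - \sum_{t} \E[x_t^\pi G(x_t^\pi)]$ is non-negative, and one must confirm that the unconstrained feasible set of the inner maximization in \ref{eq:offlineLR} indeed contains the constrained feasible set of \ref{eq:offline}. Once these two conventions are fixed, the chain of inequalities closes immediately.
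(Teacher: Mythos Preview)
Your proposal is correct and matches the paper's approach exactly: the paper does not give a formal proof of this lemma but only the one-sentence remark preceding it, which is precisely the feasibility-plus-nonnegative-slack argument you have written out in detail.
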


Once the budget constraint is relaxed, (\ref{eq:offlineLR}) decouples over auctions. Hence,
\begin{equation}\label{eq:offlineLR2}
\begin{aligned}
& \VLR(\mu) \\
= \ & \mu B + \sum_{t\in [T]}\E_{v_t}\Big[
\underset{x_t \in [a,b]}{\text{max}} \big(v_t - (1 + \mu) x_t\big) G\left(x_t\right) \Big] \\
= \ & \sum_{t\in [T]} \Big\{ \E_{v_t}\Big[\big(v_t - (1 + \mu) x^*(v_t,\mu)\big) G\big(x^*(v_t,\mu)\big) \Big] \\
& + \mu \rho_t  \Big\} \\
= \ & \sum_{t\in [T]} D_t(\mu)
\end{aligned}
\end{equation}
where the value of $\rho_t$ satisfies $\mu \left(B - \sum_{t \in [T]} \rho_t\right) = 0$ and can be interpreted as the portion of the budget pre-allocated to auction $t$, the bid
$x^*(v,\mu) \triangleq \argmax_{x \in [a,b]} \left(v - (1 + \mu) x\right) G\left(x\right)$
denotes an optimal bid in each single-auction problem of $\VLR(\mu)$ when the private value is $v$ and the highest competitor bid distribution $G(\cdot)$ is known, and
$D_t(\mu) \triangleq \mu \rho_t + \E_{v_t}\Big[\big(v_t - (1 + \mu) x^*(v_t,\mu)\big) G\big(x^*(v_t,\mu)\big) \Big]$
denotes the $t$-th sub-problem of the Lagrangian $\VLR(\mu)$.

Note that $\VLR(\mu)$ is a convex function in $\mu$; hence, we can solve a convex optimization problem
$\VLR \triangleq \min_{\mu \geq 0} \VLR(\mu)$ to obtain the tightest Lagrangian relaxation bound $\VLR$;
we let $\mu^* = \text{argmin}_{\mu \geq 0} \VLR(\mu)$ denote the optimal Lagrangian dual variable.

In general, if we know the optimal dual variable $\mu^*$ and the competitor-bid distribution $G(\cdot)$, we can consider a heuristic bidding policy that bids $x^*(v_t,\mu^*)$ in each auction as long as there is enough budget. The performance loss of this policy is typically $O(\sqrt{T})$ compared to the optimal performance $\VOPT$ (see e.g., \citealt{talluri1998analysis}).
However, since we do not know the stochasticity of the competitor bids $m_t$ or the private values $v_t$---as characterized by their CDFs $G(\cdot)$ and $F_t(\cdot)$---we are unable to directly compute $\mu^*$ and deploy the policy.
Instead, we will learn $\mu^*$ and $G(\cdot)$ in an online manner.

\subsection{Dual-Gradient Descent Algorithm}

We present our algorithm in \Cref{alg:1}, where we learn $\mu^*$ and $G(\cdot)$ in an online manner, bidding in each period using their latest estimates, and updating the estimates at the end of each period.

\begin{algorithm}
\SetAlgoLined
\SetKwInput{KwInput}{Input}
\SetKwInput{KwOutput}{Output}
\KwInput{
Time horizon $T$; Budget $B$; Step size $\eta > 0$; Budget allocation plan $\bhrho=\{\hrho_t\}_{t=1}^T$;
}
\KwOutput{
Decision policy $\pi$\;
}
Initialize dual variable $\mu_1 \geq 0$\;
Initialize $G_1(x) \equiv 1$ for all $x \in [a,b]$\;
Initialize budget $B_1 = B$\;

\For{$t = 1, \cdots, T$}{
Receive private value $v_t \in [a,b]$\;
Generate $G_{t}(\cdot)$ of $G(\cdot)$ using samples $\{m_1, \cdots, m_{t-1}\}$\;
Let $\tilde{x}_t \triangleq \text{argmax}_{x \in [a,b]} \big(v_t - (1 + \mu_t) x\big) {G}_t(x)$ be the target bid\;
Bid $x_t^\pi = \tilde{x}_t$ if $\tilde{x}_t \leq B_t$ and bid $x_t^\pi = 0$ otherwise\;
% Obtain the estimate of the pre-allocation of budget $\hrho_t$\;
Observe the highest competitor bid $m_t \in [a,b]$\;
Compute a (approximate) sub-gradient: $g_t \triangleq \hrho_t - x_t^\pi \mathbf{1}[x_t^\pi \geq m_t]$\;
Update the dual variable: $\mu_{t+1} = (\mu_t - \eta g_t)^{+}$\;
Update the remaining budget $B_{t+1} = B_t - x_t^\pi \mathbf{1}[x_t^\pi \geq m_t]$.
}
\caption{The Bidding Policy}\label{alg:1}
\end{algorithm}

Our algorithm proceeds as follows:
%We denote by $\mathcal{H}_t$ a set containing all the historical information up to the beginning of period $t$, with $\mathcal{H}_1=\emptyset$.
At the beginning of each period $t$, we maintain an estimate $\mu_t$ of the optimal Lagrangian dual variable $\mu^*$, and an estimate ${G}_t(\cdot)$ of the highest competitor bid distribution $G(\cdot)$ with historical observations. Based on the realized private value $v_t$, we compute a target bid $\tilde{x}_t \triangleq \argmax_{x \in [a,b]} \left(v_t - (1 + {\mu}_t) x\right) {G}_t\left(x\right)$. If there is enough remaining budget, we submit $\tilde{x}_t$; otherwise, we bid zero. At the end of the period, the highest competitor bid $m_t$ is revealed (full-information feedback). We then compute the sub-gradient $g_t$ and update the dual variable via a gradient descent step to obtain a new estimate $\mu_{t+1}$ of the optimal dual variable $\mu^*$.
Simultaneously, we update the remaining budget $B_{t+1}$ in the next period $t+1$.  
%We use $\rho_t$, which is the portion of budget pre-allocated to period $t$ in the definition of $D_t(\mu)$ in \eqref{eq:offlineLR2}.
%We use gradient descent to update $\mu_t$ and obtain a new estimate $\mu_{t+1}$ for the optimal dual variable $\mu^*$.

The update rule is motivated by the structure of the Lagrangian relaxation. Specifically, for each period $t$, the quantity $\rho_t - \E_{v_t}\left[x^*(v_t,\mu) G(x^*(v_t,\mu)) \right] \in \partial D_t(\mu)$ is a sub-gradient of $D_t(\mu)$ as defined in \eqref{eq:offlineLR2}, with $\rho_t$ being the portion of the budget pre-allocated to auction $t$. In our algorithm, we use an approximate stochastic sub-gradient $g_t \triangleq \hrho_t - x_t^\pi \mathbf{1}[x_t^\pi \geq m_t]$, where $x_t$ is the actual bid placed and $\hrho_t$ is a prediction of $\rho_t$. Since $\E[g_t] \approx \rho_t - \E_{v_t}[x_t G(x_t)]$ when $\hrho_t$ is close to $\rho_t$, the update $\mu_{t+1}=(\mu_t-\eta g_t)^+$ can be interpreted as a stochastic subgradient descent on the dual function, with step size $\eta$ to be specified later.

We remark that the ideal value of $\rho_t$ depends on the unknown distribution $G(\cdot)$ and cannot be computed directly; hence, we treat $\hrho_t$ as an external input representing the budget allocation plan. In \Cref{sec:unknown_F}, without any prior information or prediction, we naturally adopt the uniform allocation scheme, i.e., for each period $t$, we set $\hrho_t = B/T$ when implementing the algorithm. In \Cref{sec:known_F}, we analyze the algorithm’s performance when $\hrho_t$  is a given prediction of the optimal allocation. \Cref{discussion} further refines the analysis by introducing a benchmark that enforces per-period constraints based on $\hrho_t$, and examines the algorithm’s robustness when the plan is violated.

The goal of this adaptive procedure is to ensure that the estimates ${\mu}_t$ and ${G}_t(\cdot)$ converge to the true $\mu^*$ and ${G}(\cdot)$ rapidly so that the bid $\tilde{x}_t$ quickly approaches the ideal bid $x(v_t,\mu^*)$ and the cumulative loss relative to the benchmark $\VOPT$ remains small. In \Cref{sec:unknown_F} and \ref{sec:known_F}, we establish that this convergence indeed holds and that the resulting regret is bounded under both the uninformative and informative settings. 
Furthermore, in \Cref{discussion}, we introduce a refined benchmark $\VOPT_{\text{plan}}(\bhrho)$ and show that our algorithm achieves an improved regret guarantee. 

%One critical detail to be specified in \Cref{alg:1} is the choice of the budget allocation parameters $\rho_t$.
%When the bidder has access to the private-value distributions $F_t(\cdot)$ (\Cref{sec:known_F}), we select $\rho_t$ carefully based on the private-value distributions. When the private-value distributions are completely unknown (\Cref{sec:unknown_F}) or the private values come from a random-order model (\Cref{sec:random-order}), we simply set $\rho_t = \frac{B}{T}$, the average budget per auction, for all $t \in [T]$. We provide more detail in respective sections.

\section{The Uninformative Case---Learning without Prior Information}\label{sec:unknown_F}
We first consider an uninformative setting where the private-value distributions are entirely unknown to the bidder. Since the bidder knows nothing about the private-value distributions, it is intuitive to allocate the budget evenly over the horizon, i.e., letting $\hrho_t = \rho \triangleq \frac{B}{T}$ for each period $t \in [T]$ in \Cref{alg:1}.
We analyze the performance of this policy and show that the performance loss is $\tilde{O}(\sqrt{T})$ plus a Wasserstein-distance-based term that characterizes the deviation of the private-value distributions from their average. 
% As a direct corollary, if the private values are i.i.d. sampled from some distribution, the Wasserstein-based deviation is zero; therefore, the performance loss is simply $\tilde{O}(\sqrt{T})$. Finally, we prove that \Cref{alg:1} achieves an optimal order of regret.
In the following, we formally define the Wasserstein-based deviation in \Cref{subsec:Wasserstein-dist} and analyze the performance of \Cref{alg:1} in \Cref{subsec:performance-nonknown-case}.

\subsection{The Wasserstein-Based Measure of Deviation}\label{subsec:Wasserstein-dist}

The Wasserstein distance, also known as the Kantorovich-Rubinstein metric or the optimal transport distance (\citealt{villani2009optimal}, \citealt{galichon2018optimal}), is a distance function defined between probability distributions on a metric space. Its notion has a long history, and it has gained increasing popularity in recent years with a wide range of applications, including generative modeling (\citealt{arjovsky2017wasserstein}), robust optimization (\citealt{mohajerin2018data}), statistical estimation (\citealt{blanchet2019robust}), and online optimization (\citealt{jiang2020online}).

We denote the private-value distribution by parameter $\theta \in \Theta$, therefore we can write $F_1$ and $F_2$ as $F(\theta_1)$ and $F(\theta_2)$. Consequently, in our context, we define the Wasserstein distance between two probability distributions $F_1$ and $F_2$ on the interval $[a,b]$ as follows:
\begin{equation} \label{eq:Wasserstein_F1_F2}
\mathcal{W}(F_1,F_2) \triangleq \inf_{F_{1,2} \in \mathcal{J}} \int d(\theta_1 - \theta_2) \dd F_{1,2}(\theta_1,\theta_2)
\end{equation}
where $\mathcal{J}$ denotes the set of joint probability distributions $F_{1,2}$ of $(\theta_1,\theta_2)$ that have marginal distributions $F_1$ and $F_2$, $d(\theta_1 - \theta_2) = \sup_{x \in [a,b]} |\E_{v \sim F(\theta_1)}\left[\big(v - (1 + \mu) x\big) G(x)\right] - \E_{v \sim F(\theta_2)}\left[\big(v - (1 + \mu) x\big) G(x)\right]|$.
Let $\mathcal{F} = (F_t)_{t\in [T]}$ denote the private-value distributions in the $T$ periods. We define the Wasserstein-based measure of total deviation to be
\begin{equation*}
\mathcal{W}_T(\mathcal{F}) \triangleq \sum_{t \in [T]} \mathcal{W}\left(F_t,\bar{F}_T\right)
\end{equation*}
where $\bar{F}_T \triangleq \frac{1}{T}\sum_{t \in T} F_t$ denotes the
the average (i.e., uniform mixture) of the distributions $(F_t)_{t\in [T]}$.
In other words, we define the measure of the deviation $\mathcal{W}_T(\mathcal{F})$ to be
the sum of the Wasserstein distances between the private-value distributions and their uniform mixture. For simplicity, we rewrite $\mathcal{W}_T(\mathcal{F})$ as $\mathcal{W}_T$ in the following context.

\subsection{Regret Analysis}\label{subsec:performance-nonknown-case}

The following theorem bounds the performance loss of \Cref{alg:1} in the uninformative case.

\begin{thrm}\label{thrm:noninformative}
Consider \Cref{alg:1} with budget allocation $\hrho_t = \rho \triangleq \frac{B}{T}$ for all $t \in [T]$,
step size $\eta = \frac{1}{\sqrt{T}}$, and initial dual variable $\mu_1 \leq \frac{b}{a} + b$.
The performance of this policy, denoted by $\Vpi$, satisfies
\begin{equation*}
\VOPT - \Vpi \leq O\left(\sqrt{T\ln T}\right) + 2\mathcal{W}_T.
\end{equation*}
\end{thrm}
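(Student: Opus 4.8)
The plan is to bound $\VOPT-\Vpi$ by combining weak duality with an online-gradient-descent analysis of the dual iterates $\{\mu_t\}$, separating the loss into an $O(\sqrt{T})$ dual-tracking term, an $O(\sqrt{T\ln T})$ term for learning $G$, and a $2\mathcal{W}_T$ non-stationarity term. First I would invoke \Cref{lemma:weak-duality} to write $\VOPT\le\VLR(\mu^*)$ with $\mu^*=\text{argmin}_{\mu\ge0}\VLR(\mu)$, reducing the task to upper bounding $\VLR(\mu^*)-\Vpi$. Because the algorithm learns a single dual trajectory and a single estimate of $G$ across the whole horizon, it is natural to compare against the averaged instance in which each period draws its value from the mixture $\bar{F}_T$. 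The ground cost $d(\cdot)$ in \eqref{eq:Wasserstein_F1_F2} is exactly the worst-case gap between the per-period reward functionals under two value distributions, so—using that the pointwise maximum is $1$-Lipschitz—replacing each $F_t$ by $\bar{F}_T$ perturbs the benchmark $\VLR(\mu^*)$ by at most $\sum_t\mathcal{W}(F_t,\bar{F}_T)=\mathcal{W}_T$ and the analyzed performance by at most another $\mathcal{W}_T$; this is the origin of the factor $2$. After this reduction the instance is effectively i.i.d.\ with law $\bar{F}_T$, and it remains to prove $\tilde{O}(\sqrt{T})$ regret in the stationary core.

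For the stationary core I would first establish a boundedness lemma showing $\mu_t\in[0,\,b/a+b]$ for all $t$: once $\mu_t$ exceeds $b/a$ the target bid collapses to the reserve $a$ so the expected subgradient becomes positive and drives $\mu$ downward, while a single update moves $\mu_t$ by at most $\eta b\le b$. Next, on each period whose remaining budget admits the target bid, optimality of $\tilde{x}_t$ gives—up to the estimation error between $G_t$ and $G$—the Fenchel identity $(v_t-\tilde{x}_t)G(\tilde{x}_t)\approx\max_x(v_t-(1+\mu_t)x)G(x)+\mu_t\,\tilde{x}_tG(\tilde{x}_t)$. Summing over such periods and inserting the fixed comparator $\mu^*$, via convexity in $\mu$ of the single-auction value $\max_x(v_t-(1+\mu)x)G(x)$ (whose subgradient at $\mu^*$ is the negative optimal consumption $-b^{\mu^*}_t$), rewrites $\VLR(\mu^*)-\Vpi$ as the sum of three pieces: a term $-\sum_t(\rho-b^{\mu^*}_t)(\mu_t-\mu^*)$, a dual-descent term $\sum_t\mu_t g_t$ with $g_t=\rho-\tilde{x}_t\mathbf{1}[\tilde{x}_t\ge m_t]$, and an accumulated learning error $\sum_t\mathrm{LearnErr}_t$.

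I would then bound the three pieces. (i) Taking expectations, the first piece equals $-\sum_t D_t'(\mu^*)\,(\E[\mu_t]-\mu^*)$, since $v_t$ is drawn fresh and independent of the $\mathcal{F}_{t-1}$-measurable $\mu_t$; crucially, after the reduction every period shares the law $\bar{F}_T$, so $D_t'(\mu^*)=\bar{D}'(\mu^*)$ is common to all $t$, and the KKT condition for $\mu^*$ makes it zero (interior) or of a sign rendering the term $\le0$ (at $\mu^*=0$, where $\mu_t\ge0$). (ii) For the dual-descent term, the projected update $\mu_{t+1}=(\mu_t-\eta g_t)^+$ gives $2\eta\,\mu_t g_t\le\mu_t^2-\mu_{t+1}^2+\eta^2 g_t^2$; telescoping with $|g_t|\le b$ yields $\sum_t\mu_t g_t\le\frac{\mu_1^2}{2\eta}+\frac{\eta b^2 T}{2}=O(\sqrt{T})$ for $\eta=1/\sqrt{T}$, and since $\mu_t$ is $\mathcal{F}_{t-1}$-measurable we may pass to expectations without an extra concentration step. (iii) For the learning error, $G_t$ is the empirical CDF of the i.i.d.\ samples $m_1,\dots,m_{t-1}$, so the DKW inequality with a union bound gives $\sup_x|G_t(x)-G(x)|\le\sqrt{\ln T/(t-1)}$ simultaneously for all $t$ with probability $1-O(1/T)$; the per-period bid suboptimality is then $O(\sqrt{\ln T/(t-1)})$, and $\sum_t 1/\sqrt{t}=O(\sqrt{T})$ produces the $O(\sqrt{T\ln T})$ term (the complementary event contributes only $O(1)$). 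Finally, periods after the budget is exhausted are bid at zero; bounding their number—equivalently the spending overshoot past $B$—by $O(\sqrt{T})$ through the same dual dynamics shows the stopping effect is absorbed. Collecting these with the two Wasserstein corrections yields $\VOPT-\Vpi\le O(\sqrt{T\ln T})+2\mathcal{W}_T$.

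I expect the main obstacle to be the coupling between the budget-depletion bookkeeping and the two simultaneous estimation errors: the target bid—and hence its realized consumption and the subgradient $g_t$—is perturbed at once by the dual estimate $\mu_t$ and the distribution estimate $G_t$, and the budget may run out strictly before period $T$, which severs the clean telescoping of the Fenchel identity across all $T$ periods. Showing that the early-stopping loss is only $O(\sqrt{T})$ while keeping the learning error separately controlled by DKW, and establishing the a priori boundedness of $\mu_t$ (needed to make both the descent inequality and the expectation arguments quantitative), are the delicate technical points; the Wasserstein reduction, by contrast, is conceptually the cleanest step once the ground cost $d(\cdot)$ is matched to the per-period reward functional, and it is precisely what makes term (i) vanish.
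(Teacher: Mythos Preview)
Your proposal is correct in outline and reaches the same bound, but it takes a genuinely different route from the paper on the key non-stationarity step. The paper does \emph{not} compare against the fixed optimal dual $\mu^*$; instead it invokes weak duality at the \emph{random} comparator $\bar\mu=\frac{1}{T}\sum_t\mu^{\scriptscriptstyle R}_t$, obtaining $\VOPT\le \E[\VLR(\bar\mu)]$, then uses Jensen's inequality (convexity of $L(\cdot,\bar F_T)$) to pass from $\sum_t L(\mu^{\scriptscriptstyle R}_t,\bar F_T)$ to $T\,L(\bar\mu,\bar F_T)$. Your approach instead keeps $\mu^*$, linearizes via a subgradient of $f^*(v,\cdot)$ at $\mu^*$, and kills term~(i) by KKT. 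Both arguments rely on the same Lipschitz lemma ($|L(\mu,F_1)-L(\mu,F_2)|\le\mathcal{W}(F_1,F_2)$, the paper's Lemma~A.3), the same DKW bound for $G_t$, and the same telescoping for $\sum_t\mu_t g_t$; what you gain is that you avoid Jensen and work directly with the optimality of $\mu^*$ for the averaged dual $\bar D(\mu)=L(\mu,\bar F_T)+\mu\rho$ (noting $\VLR(\mu)=T\bar D(\mu)$ exactly, since $\bar F_T$ is the mixture). What the paper gains is that with $\bar\mu$ as comparator there is no need to isolate a subgradient term at all: after applying Lemma~A.3 on both sides the bound collapses immediately.

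Two places in your write-up deserve tightening. First, the phrase ``the instance is effectively i.i.d.'' is misleading: you cannot literally replace $F_t$ by $\bar F_T$ in the algorithm's trajectory, since $\mu_t$ is $\mathcal{F}_{t-1}$-measurable on the \emph{original} instance. What actually happens is that Lemma~A.3 lets you replace $\E[L(\mu_t,F_t)]$ by $\E[L(\mu_t,\bar F_T)]$ at cost $\mathcal{W}(F_t,\bar F_T)$ while leaving the distribution of $\mu_t$ untouched; your KKT argument for term~(i) then goes through because $\bar D'(\mu^*)$ is common across $t$ and the expectation of $\mu_t$ (still under the original law) is nonnegative. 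Second, your budget-depletion step is only a gesture; the paper makes it concrete by analyzing an \emph{alternate system} with no budget constraint but a per-period penalty $b$ whenever spending exceeds the remaining budget, which path-wise lower-bounds $\Vpi$ and converts the stopping loss into the clean term $b\cdot\E\big[\tfrac{1}{a}(\sum_t z^{\scriptscriptstyle R}_t-B)^+ +1\big]$, bounded by $O(1/\eta)=O(\sqrt{T})$ via the same dual telescoping you use for term~(ii).
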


To prove \Cref{thrm:noninformative}, we consider \Cref{alg:1} in an alternate system without the budget constraint (i.e., the remaining budget can go negative). The performance gap decomposes into two components: (i) the difference between the benchmark $\VOPT$ and the performance of \Cref{alg:1} in the alternate system, and (ii) the difference between the performances of \Cref{alg:1} in the alternate and original systems. Both components are bounded separately, yielding the desired regret bound (see \Cref{proof3.1} in Appendix).

We remark that the uniform allocation $\hrho_t = B/T$ used in this section is a special case of the budget allocation plan studied in \Cref{sec:known_F}. This illustrates that even without any prior information, the algorithm can operate under a simple default plan, and its regret naturally degrades with the non-stationarity measure.
More specifically, if all the private values are i.i.d. from some distribution, then $\mathcal{W}_T = 0$ and the regret is simply $O(\sqrt{T\ln T})$. We state this special case as follows:

\begin{coro}\label{coro:iid}
Suppose that the private values are i.i.d. sampled from a certain distribution. Then, the performance of
\Cref{alg:1} with budget allocation $\hrho_t = \rho \triangleq \frac{B}{T}$ for all $t \in [T]$,
step size $\eta = \frac{1}{\sqrt{T}}$, and initial dual variable $\mu_1 \leq \frac{b}{a} + b$,
denoted by $\Vpi$, satisfies
\begin{equation*}
\VOPT - \Vpi \leq O\left(\sqrt{T\ln T}\right).
\end{equation*}
\end{coro}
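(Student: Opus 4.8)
The plan is to read off the corollary directly from Theorem~\ref{thrm:noninformative}: that theorem already handles arbitrary, possibly non-stationary private-value distributions, so the i.i.d.\ case is a pure specialization in which the only task is to show that the non-stationarity penalty $2\mathcal{W}_T$ vanishes. All the remaining ingredients---the $O(\sqrt{T\ln T})$ term, the step size $\eta = 1/\sqrt{T}$, the initial dual variable $\mu_1 \le b/a + b$, and the uniform allocation $\hrho_t = B/T$---carry over verbatim, since the i.i.d.\ hypothesis does not alter the hypotheses of the theorem. Thus the entire content of the proof reduces to computing $\mathcal{W}_T$ under the i.i.d.\ assumption.

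First I would translate the i.i.d.\ hypothesis into the notation of Section~\ref{subsec:Wasserstein-dist}. If the private values are i.i.d.\ from a single distribution, then all the per-period laws coincide, $F_1 = F_2 = \cdots = F_T$, and hence their uniform mixture satisfies $\bar{F}_T = \frac{1}{T}\sum_{t\in[T]} F_t = F_t$ for every $t \in [T]$; that is, each $F_t$ is identical to the reference distribution $\bar{F}_T$.

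Next I would evaluate $\mathcal{W}(F_t,\bar{F}_T)$ from definition~\eqref{eq:Wasserstein_F1_F2}. Because the two marginals coincide, the diagonal coupling---the element of $\mathcal{J}$ that places all of its mass on the event $\{\theta_1 = \theta_2\}$---is admissible. Along the diagonal the transport cost is $d(0) = \sup_{x\in[a,b]}\big|\E_{v\sim F(\theta_1)}[(v-(1+\mu)x)G(x)] - \E_{v\sim F(\theta_1)}[(v-(1+\mu)x)G(x)]\big| = 0$, since the two expectations are taken under the same law. Hence the infimum defining $\mathcal{W}(F_t,\bar{F}_T)$ is $0$, and summing over $t\in[T]$ gives $\mathcal{W}_T = 0$. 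Substituting this into the bound of Theorem~\ref{thrm:noninformative} yields $\VOPT - \Vpi \le O(\sqrt{T\ln T}) + 2\cdot 0 = O(\sqrt{T\ln T})$, as claimed.

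I do not anticipate a genuine obstacle here, because all of the analytical work---the dual-gradient-descent regret decomposition, the online estimation of $G(\cdot)$, and the coupling between the budget-constrained and unconstrained (``alternate'') systems---has already been discharged in the proof of Theorem~\ref{thrm:noninformative}. The only point that warrants a line of justification is the definitional claim that the diagonal coupling belongs to $\mathcal{J}$ and that $d$ vanishes on it; both are immediate once the marginals are equal. The value of the corollary is therefore conceptual rather than technical: it confirms that our non-stationary bound collapses to the minimax-optimal $\tilde{\Theta}(\sqrt{T})$ rate of the stationary i.i.d.\ regime established in prior work, serving as a consistency check on Theorem~\ref{thrm:noninformative}.
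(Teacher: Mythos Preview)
Your proposal is correct and matches the paper's approach exactly: the corollary is stated as an immediate specialization of Theorem~\ref{thrm:noninformative}, with the one-line observation that $\mathcal{W}_T = 0$ when all $F_t$ coincide. Your diagonal-coupling justification for $\mathcal{W}(F_t,\bar F_T)=0$ is slightly more explicit than what the paper writes, but the argument is the same.
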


We also remark that \Cref{alg:1} does not use any information on the deviation measure $\mathcal{W}_T$. On the one hand, this prevents us from making any additional assumptions on the prior knowledge of $\mathcal{W}_T$, as has been done in the non-stationary online optimization literature (\citealt{besbes2014stochastic}, \citealt{besbes2015non}, \citealt{cheung2019non}). Therefore, our algorithm can be applied to a more general setting. On the other hand, this also means that there is nothing \Cref{alg:1} can do even if the bidder knows the value of $\mathcal{W}_T$ beforehand. We demonstrate in the following proposition that even if the value of $\mathcal{W}_T$ is given as priori, any online algorithm (possibly using the knowledge of $\mathcal{W}_T$) still cannot achieve a regret better than $O(\sqrt{T} + \mathcal{W}_T)$, which shows that our regret bound in \Cref{thrm:noninformative} is of optimal order (up to a logarithmic factor).

\begin{prop}\label{prop:lbd-W}
No online policy can achieve a regret bound better than $O(\sqrt{T} + \mathcal{W}_T)$.
\end{prop}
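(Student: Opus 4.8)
The plan is to prove the two terms separately and combine them through $\sqrt{T}+\mathcal{W}_T = \Theta(\max\{\sqrt{T},\mathcal{W}_T\})$. That is, for an arbitrary online policy and any prescribed deviation level $W$, I would exhibit an instance with $\mathcal{W}_T = W$ on which the regret is $\Omega(\sqrt{T})$, together with a (possibly different) instance with $\mathcal{W}_T = W$ on which the regret is $\Omega(W)$; taking whichever of the two dominates produces a single instance with regret $\Omega(\sqrt{T}+\mathcal{W}_T)$, which is exactly what ``no better than $O(\sqrt{T}+\mathcal{W}_T)$'' demands.

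\textbf{The $\Omega(\sqrt{T})$ term.} For this term I restrict to stationary (i.i.d.) private values, so that $\mathcal{W}_T = 0$. In this regime the model reduces to the stationary budget-constrained first-price problem whose minimax regret is already known to be $\tilde{\Theta}(\sqrt{T})$ \citep{ai2022no,wang2023learning}; invoking that matching lower bound shows that no policy beats $\Omega(\sqrt{T})$ when $\mathcal{W}_T = 0$, hence a fortiori whenever $\mathcal{W}_T \le \sqrt{T}$. If a self-contained version is desired, it follows the standard two-point (Le Cam) construction for stochastic bandits-with-knapsacks: one perturbs the competitor CDF $G(\cdot)$ near the bid that binds the budget under $\mu^*$ by $\Theta(1/\sqrt{T})$, so the two instances have $O(1/T)$ per-period KL divergence and cannot be told apart with constant confidence, and balancing the estimation error against the budget-constrained exploitation loss yields $\Omega(\sqrt{T})$ on one of them.

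\textbf{The $\Omega(\mathcal{W}_T)$ term (the new content).} Here I would fix the competitor bid deterministically, $m_t \equiv p$, so that $G(x)=\mathbf{1}[x\ge p]$ is pinned down after one observation and no learning of $G$ is needed; this isolates the cost of \emph{budget misallocation} under non-stationarity. Set $B=\tfrac{T}{2}p$, so exactly $T/2$ winning bids (each of cost $p$) are affordable, and use two deterministic point-mass value sequences sharing a common prefix: on periods $1,\dots,T/2$ both take $v_t=p+\epsilon$, while on periods $T/2+1,\dots,T$ sequence~(A) takes $v_t=p+2\epsilon$ and sequence~(B) takes $v_t=p$ (zero surplus). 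Since $\E_{v\sim F(\theta)}[(v-(1+\mu)x)G(x)]=(\bar v_\theta-(1+\mu)x)G(x)$, the $(1+\mu)xG(x)$ piece cancels in the definition of $d$, giving the reduced ground metric $d(\theta_1,\theta_2)=|\bar v_{\theta_1}-\bar v_{\theta_2}|\sup_x G(x)=|\bar v_{\theta_1}-\bar v_{\theta_2}|$; a one-line optimal-transport computation then gives $\mathcal{W}_T=\tfrac{T\epsilon}{2}$ for \emph{both} sequences, so choosing $\epsilon=2W/T$ realizes any target $W$ (up to $\Theta(T)$). The full-information benchmark concentrates all $T/2$ wins on the more profitable half, giving $\VOPT=T\epsilon$ on (A) and $\VOPT=\tfrac{T}{2}\epsilon$ on (B). Crucially, the prefix observations are identical across the two sequences, so any policy wins the same expected number $k$ of prefix auctions on both; a direct accounting then yields regret at least $k\epsilon$ on (A) and exactly $(\tfrac{T}{2}-k)\epsilon$ on (B) (the former attained by budget-greedy suffix play, suboptimal suffix play only raising it). Their average $\tfrac{T\epsilon}{4}=\tfrac12\mathcal{W}_T$ is independent of $k$, so by averaging (the easy direction of Yao) at least one deterministic sequence forces regret $\ge\tfrac12\mathcal{W}_T=\Omega(\mathcal{W}_T)$.

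\textbf{Main obstacle.} I expect the difficulty to lie entirely in the $\Omega(\mathcal{W}_T)$ construction. The crux is engineering the two suffixes so that the per-sequence regrets $k\epsilon$ and $(\tfrac{T}{2}-k)\epsilon$ are \emph{exactly complementary} in the policy's uncontrollable prefix consumption $k$: this complementarity—forced by the symmetric prefix value $p+\epsilon$ together with the budget choice $B=\tfrac{T}{2}p$—is precisely what makes the averaged bound hold uniformly over all (even adaptive, randomized) policies, rather than only against a particular one. Two secondary checks must be discharged: verifying the Wasserstein bookkeeping with the reward-based ground metric $d$ (not the Euclidean one) and confirming both alternates carry the \emph{same} deviation $\mathcal{W}_T$, so the averaging attributes the regret to one legitimate deviation level; and handling the budget-feasibility boundary cases so the offline optimum indeed captures a full half while the online suffix play is pinned down by the leftover budget. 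With these in hand, combining with the stationary $\Omega(\sqrt{T})$ instance via the max bound completes the argument.
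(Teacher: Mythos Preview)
Your proposal is correct and follows essentially the same route as the paper. Both invoke the known stationary $\Omega(\sqrt{T})$ lower bound, and for the $\Omega(\mathcal{W}_T)$ part both fix a deterministic competitor bid, choose the budget so that exactly $T/2$ wins are affordable, and construct two value sequences with an identical first half and suffixes shifted by $\pm\epsilon$ about the prefix value (the paper uses $3/4\pm\mathcal{W}_T/T$, you use $p+\epsilon\pm\epsilon$); the indistinguishable prefix forces any online policy to commit the same number $k$ of first-half wins on both, yielding complementary regrets whose maximum is $\Omega(\mathcal{W}_T)$.
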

%The proof of \Cref{prop:lbd-W} follows the proof of Theorem 2 in \cite{jiang2020online},
%and we omit the detail here.

\begin{remark}[Advantage of Wasserstein Distance]
The analysis and regret bound will still hold if we change the underlying distance to be the total-variation distance or the KL divergence. We use the Wasserstein distance because it is a tighter measure of deviation than the total-variation distance or the KL divergence. To see this, consider two probability distributions $F_1$ and $F_2$, and suppose that $F_1$ is a point-mass distribution with support $\{v\}$ and $F_2$ is another point-mass distribution with support $\{v+\epsilon\}$ for some $\epsilon>0$. Then, the total-variation distance or KL divergence between $F_1$ and $F_2$ is one or $\infty$, because these two distributions have different supports. In contrast, the Wasserstein distance between $F_1$ and $F_2$ is $\epsilon$, which is tight (and probably more intuitive). 
% To our best knowledge, this is the first time that the Wasserstein distance is used to measure the deviations of the private-value distributions in online bidding, and we regard the use of it as our modeling contribution.
\end{remark}

\section{The Informative Case---Learning with Budget Allocation Predictions}\label{sec:known_F}

\Cref{sec:unknown_F} considers a pessimistic setting (in terms of the amount of prior information), where the private-value distributions are entirely unknown to the bidder. In this section, we instead consider an informative setting where the bidder has access to some predictions over the budget allocation $\rho_t$, denoted as $\hrho_t$. It follows that $\sum_{t=1}^T \hrho_t = B$.
Specifically, we show how the gap between the prediction $\hrho_t$ and the true allocation $\rho_t$ will influence our bound. We define the deviation budget 
\begin{equation}\label{eqn:DeviationRho}
    V_T = \sum_{t=1}^T|\rho_t-\hrho_t|.
\end{equation}
%to design a more refined procedure to compute the pre-allocated portion of budget $\rho_t$ in \Cref{alg:1}, which will lead to an improved regret bound of $O(\sqrt{T})$ even when $\mathcal{W}_T(F)$ is non-zero.

%\subsection{Budget Allocation as a Predictor}

Suppose that the distribution $G(\cdot)$ of the highest competitor bid is also known. Then the optimal $\rho_t$ can be computed as
\begin{equation}\label{def:rho1}
\rho_t \triangleq \mathbb{E}_{v_t}\big[x^*(v_t, \mu^*)G(x^*(v_t, \mu^*))\big],~\forall t\in[T]
\end{equation}

We refer to $\rho_t$ as the ideal budget allocation for period $t$. In practice, the bidder may not know $\rho_t$ exactly but can obtain a prediction $\hrho_t$ from historical data or campaign targets. (In \Cref{discussion}, we will consider a stricter benchmark that enforces per-period spending limits exactly equal to $\hrho_t$, allowing us to isolate the cost of misalignment with the plan.)

Recall that $\mu^*=\text{argmin}_{\mu\geq0} V^{\text{LR}}(\mu)$ is the optimal Lagrangian dual variable and $x^*(v_t, \mu^*)=\text{argmax}_{x\in[a,b]}(v_t-(1+\mu^*)x)G(x)$ is the optimal bid given the private value $v_t$ and dual variable $\mu^*$. Therefore, $\rho_t$ can be viewed as the expected consumption of budget in auction $t$ of the Lagrangian relaxation $\VLR(\mu^*)$.
We demonstrate in the following Lemma \ref{lem:decomposition} that if we use $\rho_t$ as the pre-allocation of budget, then the dual variable $\mu^*$ is also optimal to the $t$-th problem $D_t(\mu)$ of the Lagrangian relaxation \eqref{eq:offlineLR2} for all $t \in [T]$.

\begin{lemma}\label{lem:decomposition}
Suppose that the pre-allocation of budget $\rho_t$ is defined in \eqref{def:rho1} for each $t\in[T]$,
and let $\mu^*=\argmin_{\mu\geq0} V^{\text{LR}}(\mu)$ denotes the optimal dual variable.
Then, it holds that
\begin{equation*}\label{eqn:decomposition}
    \mu^*\in \argmin_{\mu\geq0} D_t(\mu)
\end{equation*}
for each $t\in[T]$, where $D_t(\mu)$ is the period-$t$ problem of the Lagrangian relaxation \eqref{eq:offlineLR2}.
Moreover, it holds that
\begin{equation*}\label{eqn:Upperdecompose}
    V^{\text{OPT}}\leq \min_{\mu\geq0}V^{\text{LR}}(\mu)=\sum_{t\in[T]}\min_{\mu_t\geq0}D_t(\mu_t).
\end{equation*}
\end{lemma}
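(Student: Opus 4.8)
The plan is to treat each period-$t$ subproblem $D_t$ as a one-dimensional convex function of the dual variable and to observe that the definition of $\rho_t$ in \eqref{def:rho1} is engineered precisely so that the subgradient of $D_t$ vanishes at $\mu^*$. First I would record that $D_t(\mu)=\mu\rho_t+\E_{v_t}\big[\max_{x\in[a,b]}(v_t-(1+\mu)x)G(x)\big]$ is convex in $\mu$: for each fixed $x$ the map $\mu\mapsto(v_t-(1+\mu)x)G(x)$ is affine, so the inner maximum is a pointwise supremum of affine functions, hence convex, and this is preserved by taking expectations over $v_t$ and adding the linear term $\mu\rho_t$. By the envelope theorem (Danskin), for the maximizer selection $x^*(v_t,\mu)$ the quantity $\rho_t-\E_{v_t}[x^*(v_t,\mu)G(x^*(v_t,\mu))]$ belongs to the subdifferential $\partial D_t(\mu)$. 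An identical computation shows $B-\sum_{t\in[T]}\E_{v_t}[x^*(v_t,\mu)G(x^*(v_t,\mu))]\in\partial \VLR(\mu)$.

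For the first claim, I would evaluate the subgradient of $D_t$ at $\mu=\mu^*$. By the definition $\rho_t=\E_{v_t}[x^*(v_t,\mu^*)G(x^*(v_t,\mu^*))]$ in \eqref{def:rho1}, this subgradient is exactly $\rho_t-\rho_t=0$, so $0\in\partial D_t(\mu^*)$. Since $D_t$ is convex on all of $\mathbb{R}$ and $\mu^*\geq 0$, a zero subgradient certifies that $\mu^*$ is a global minimizer of $D_t$, and in particular a minimizer over the feasible set $\{\mu\geq0\}$; this gives $\mu^*\in\argmin_{\mu\geq0}D_t(\mu)$ for every $t\in[T]$.

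For the second claim, I first establish the complementary-slackness identity $\mu^*\big(\sum_{t\in[T]}\rho_t-B\big)=0$. This follows from the optimality of $\mu^*=\argmin_{\mu\geq0}\VLR(\mu)$: if $\mu^*>0$, the first-order condition $0\in\partial\VLR(\mu^*)$ together with the subgradient computed above forces $\sum_{t\in[T]}\rho_t=B$, while if $\mu^*=0$ the product is trivially zero. Using this identity and cancelling the common expectation term,
\begin{align*}
\sum_{t\in[T]} D_t(\mu^*)
&= \mu^*\sum_{t\in[T]}\rho_t + \sum_{t\in[T]}\E_{v_t}\Big[\max_{x\in[a,b]}\big(v_t-(1+\mu^*)x\big)G(x)\Big] \\
&= \mu^* B + \sum_{t\in[T]}\E_{v_t}\Big[\max_{x\in[a,b]}\big(v_t-(1+\mu^*)x\big)G(x)\Big] \\
&= \VLR(\mu^*).
\end{align*}
Combining this with the first claim, which yields $\min_{\mu_t\geq0}D_t(\mu_t)=D_t(\mu^*)$, gives $\sum_{t\in[T]}\min_{\mu_t\geq0}D_t(\mu_t)=\sum_{t\in[T]}D_t(\mu^*)=\VLR(\mu^*)=\min_{\mu\geq0}\VLR(\mu)$. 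The inequality $\VOPT\leq\min_{\mu\geq0}\VLR(\mu)$ is then immediate from weak duality (\Cref{lemma:weak-duality}), completing the proof.

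The step I expect to be the main obstacle is handling the non-smoothness rigorously. Since the paper allows $G(\cdot)$ to be discrete, the inner maximizer $x^*(v_t,\mu)$ need not be unique and the functions $D_t$ and $\VLR$ are only subdifferentiable rather than differentiable; I must therefore argue entirely with subdifferentials and, crucially, use the same maximizer selection in \eqref{def:rho1} as in the envelope subgradient so that the two $\rho_t$ terms cancel exactly. The boundary case $\mu^*=0$ must also be treated with a one-sided optimality condition rather than two-sided stationarity, which is exactly where complementary slackness (rather than plain first-order optimality) is needed.
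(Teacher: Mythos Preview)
Your proposal is correct and follows essentially the same approach as the paper: both compute the (sub)gradient of $D_t$ at $\mu^*$ and observe that the definition of $\rho_t$ makes it vanish, then establish $\sum_t D_t(\mu^*)=\VLR(\mu^*)$ via the complementary-slackness identity $\mu^*\big(B-\sum_t\rho_t\big)=0$ derived from the optimality of $\mu^*$. If anything, your treatment is slightly more careful than the paper's, which writes $\nabla D_t$ and $\nabla\VLR$ as if differentiable and does not explicitly separate the boundary case $\mu^*=0$; your attention to subdifferentials and the one-sided optimality condition is well placed but not strictly needed to match the paper's level of rigor.
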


The above lemma shows that the functions $D_t(\mu)$ share the same minimizer $\mu^*$ with $\rho_t$ defined in \eqref{def:rho1}. This property enables \Cref{alg:1} to learn the optimal dual variable $\mu^*$ online using the gradient of $D_t(\mu)$.
In contrast, if we ignore the non-stationarity of $\mathcal{F} = (F_t)_{t\in[T]}$ and simply use the average budget per auction $\rho_t=\frac{B}{T}$ to define $D_t(\mu)$ as in the previous section, then the minimizer of $D_t(\mu)$ would deviate from $\mu^*$ by roughly $\mathcal{W}(F_t,\bar{F}_T)$ for each $t\in[T]$. This deviation accumulates over time, giving rise to the $\mathcal{W}_T$ term in the regret bound in \Cref{thrm:noninformative}. The innovative design of $\rho_t$ in \eqref{def:rho1} needs to utilize the distributional information of $\mathcal{F}$ to handle the non-stationarity and eliminate the deviation term $\mathcal{W}_T$ in the final regret bound.
%The idea of designing $\rho_t$ in \eqref{eqn:decomposition} is first developed in \cite{jiang2020online} in the context of online optimization and we generalize to the first price bidding setting where online learning is also needed.
When the distribution $G(\cdot)$ and $F_t(\cdot)$ for each $t\in[T]$ is given, we can compute $\rho_t$ exactly via \eqref{def:rho1} and set $\hrho_t=\rho_t$ for each $t\in[T]$. In practice, however, these distributions are typically unknown, and we must rely on predictions $\hrho_t$. \Cref{alg:1} takes these predictions as input, and we have the following theorem:

%In practice, however, the distribution $G(\cdot)$ is unknown, so we cannot compute the ideal $\rho_t$ in \eqref{def:rho1}.
%Instead, we use the estimate $G_t(\cdot)$ to compute an approximation of $\rho_t$. Therefore, in \Cref{alg:1}, we will use
%\begin{equation}\label{eqn:Hatrho}
%\hat{\rho}_t=\mathbb{E}_{v_t}\big[\hat{x}^*(v_t, \hat{\mu}_t)G_t(\hat{x}^*(v_t, \hat{\mu}_t))\big],~\forall t\in[T]
%\end{equation}
%as an estimate of $\rho_t$,
%where
%\begin{equation*}
%\hat{\mu}_t=\text{argmin}_{\mu\geq0}~\hatVLR_t(\mu)
%\end{equation*}
%with
%\[\begin{aligned}
%\hatVLR_t(\mu) = \mu B + \sum_{t\in[T]}\mathbb{E}_{v_t}\left[ \max_{x_t\in[a,b]}(v_t- (1+\mu)x_t)G_t(x_t) \right]
%\end{aligned}\]
%is an optimal dual variable of the Lagrangian problem $\hatVLR_t(\mu)$ where we replace $G(\cdot)$ with $G_t(\cdot)$,
%and
%\begin{equation*}
%\hat{x}^*(v, {\mu}) \triangleq \text{argmax}_{x\in[a,b]}(v- (1+\mu)x)G_t(x)
%\end{equation*}
%denotes an optimal bid in each single-auction problem of $\hatVLR_t(\mu)$.
%With $\hat{\rho}_t$ defined in \eqref{eqn:Hatrho} to carry out \Cref{alg:1}, we have the following regret bound, which gets rid of the deviation term $\mathcal{W}_T(F)$ and thus improves the regret bound in \Cref{thrm:noninformative} by utilizing the distributional information of $\mathcal{F}$.
%\subsection{Regret Analysis}

\begin{thrm}\label{thrm:informative}
Consider \Cref{alg:1} with predictions $\hrho_t$ for all $t \in [T]$,
step size $\eta = \frac{1}{\sqrt{T}}$, and initial dual variable $\mu_1 \leq \frac{b}{a} + b$.
The performance of this policy, denoted by $\Vpi$, satisfies
\begin{equation*}
\VOPT - \Vpi \leq O\left(\sqrt{T\ln T}+V_T\right).
\end{equation*}
\end{thrm}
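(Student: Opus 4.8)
The plan is to mirror the argument behind \Cref{thrm:noninformative}, keeping the budget-allocation mismatch in its raw form $V_T=\sum_t|\rho_t-\hrho_t|$ rather than specializing it to the Wasserstein deviation. The starting point is the same alternate-system decomposition used in \Cref{proof3.1}: introduce a fictitious system in which \Cref{alg:1} always submits its target bid $\tilde{x}_t$ (the remaining budget is allowed to go negative, so the branch $x_t^\pi=0$ is never triggered), and let $\tilde{V}$ be its expected reward. Then write
\[
\VOPT-\Vpi = \big(\VOPT-\tilde{V}\big) + \big(\tilde{V}-\Vpi\big),
\]
and bound the two terms separately.

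For the first term $\VOPT-\tilde{V}$, I would invoke \Cref{lem:decomposition} to replace the benchmark by the separable upper bound $\VOPT\le\sum_{t}D_t(\mu^*)$, where each $D_t$ uses the ideal allocation $\rho_t$ from \eqref{def:rho1} and is minimized at the common $\mu^*$. The per-period optimality of the target bid gives $(v_t-(1+\mu_t)\tilde{x}_t)G_t(\tilde{x}_t)\ge(v_t-(1+\mu_t)x^*(v_t,\mu^*))G_t(x^*(v_t,\mu^*))$, so after taking expectations and summing, the gap $\sum_t D_t(\mu^*)-\tilde{V}$ reduces to (a) a dual online-gradient-descent regret term, (b) the error from estimating $G$ by the empirical CDF $G_t$, and (c) the allocation-prediction error. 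Term (a) is controlled by the standard OGD bound with $\eta=1/\sqrt{T}$ once the dual iterates are shown to stay in a bounded interval (here the hypothesis $\mu_1\le b/a+b$ together with bounded subgradients $|g_t|\le b$ keeps $\mu_t=O(1)$), giving $O(\sqrt{T})$. Term (b) is handled by a DKW-type uniform concentration of $G_t$ to $G$, whose per-period error is $O(\sqrt{\ln T/t})$ and sums to $O(\sqrt{T\ln T})$. Term (c) is precisely where the prediction enters: the algorithm descends on $\hrho_t-x_t^\pi\mathbf{1}[x_t^\pi\ge m_t]$ rather than on the true subgradient $\rho_t-\E_{v_t}[x^*G(x^*)]$, and multiplying the discrepancy $\hrho_t-\rho_t$ by the bounded dual gap $\mu_t-\mu^*$ yields a contribution of order $\sum_t|\rho_t-\hrho_t|=V_T$.

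For the second term $\tilde{V}-\Vpi$, I would bound the reward forfeited when the real system is forced to bid zero after its budget is exhausted. Using $\sum_t\hrho_t=B$, the dual-ascent dynamics drive the realized expenditure to track the targets $\hrho_t$, so the cumulative overspend is controlled by the same OGD and concentration machinery; each forfeited auction costs at most $b-a$ in surplus, and the number of such auctions is $O(\sqrt{T\ln T}+V_T)$. Adding the two bounds gives $\VOPT-\Vpi\le O(\sqrt{T\ln T}+V_T)$.

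The main obstacle is the bookkeeping in term (c) together with the stopping-time argument in $\tilde{V}-\Vpi$: one must verify that the prediction error interacts with the dual descent only linearly (so it contributes $V_T$ rather than $V_T^2$ or $V_T\sqrt{T}$), and that keeping the dual variable bounded---needed for both the OGD regret and the budget-depletion bound---is not disturbed by using $\hrho_t$ in place of $\rho_t$. Once these are in place, the estimate $G_t\to G$ and the choice $\eta=1/\sqrt{T}$ make every remaining piece identical to \Cref{proof3.1}, with $2\mathcal{W}_T$ simply replaced by $V_T$.
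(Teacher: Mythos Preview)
Your proposal is correct and follows essentially the same route as the paper: the identical alternate-system decomposition, DKW control of $G_t$, OGD telescoping for the dual term, and budget-overspend bound via the dual update dynamics. The only organizational difference is that the paper plugs the algorithm's own $\muR_t$ directly into the upper bound $\VOPT\le\sum_t D_t(\muR_t)$ (valid since \Cref{lem:decomposition} gives $\VOPT\le\sum_t\min_{\mu}D_t(\mu)\le\sum_t D_t(\muR_t)$), so the $V_T$ contribution appears immediately as $\sum_t\muR_t(\rho_t-\hrho_t)\le(b/a+b)\,V_T$ rather than through your biased-gradient framing with the gap $\mu_t-\mu^*$; the two are equivalent once $\muR_t$ is uniformly bounded.
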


Though the theoretical guarantee derived in \Cref{thrm:informative} depends on the prediction error $V_T$, it is important to note that our \Cref{alg:1} does not require any knowledge about the value of $V_T$. Given an estimate $\hrho_t$ of the budget allocation, the algorithm can take it as input and the theorem implies that the optimality gap is small as long as the estimation error $V_T$ is small. Moreover, analogous to Proposition \ref{prop:lbd-W}, we establish a matching lower bound showing that the regret achieved in the informative setting is also order-optimal. This result is formalized in the following proposition:

\begin{prop}\label{prop:lbd-V}
No online policy can achieve a regret bound better than $O(\sqrt{T} + V_T)$.
\end{prop}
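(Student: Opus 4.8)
The plan is to prove a matching lower bound $R_T(\pi)\geq c\,(\sqrt{T}+V_T)$ for some universal $c>0$, where for \emph{every} online policy $\pi$ we produce an adversarial instance together with a prediction $\bhrho$ (and hence an induced error $V_T=\sum_t|\rho_t-\hrho_t|$) on which the regret is this large. Since $\sqrt{T}+V_T\leq 2\max\{\sqrt{T},V_T\}$, it suffices to force $\Omega(\sqrt{T})$ regret on one hard instance and $\Omega(V_T)$ on another, and then concatenate the two gadgets on disjoint blocks of periods so that a single instance witnesses both terms at once.

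For the $\sqrt{T}$ term I would reuse the stationary hard instance already behind \Cref{prop:lbd-W}: take $F_t\equiv F$ i.i.d.\ and hand the policy the \emph{exact} allocation $\hrho_t=\rho_t=B/T$, so that $V_T=0$. A standard information-theoretic (two-point / Le Cam) argument then applies—either two competitor-bid laws $G,G'$ at KL distance $\Theta(1/T)$ that induce distinct optimal bids, or the $\Theta(\sqrt{T})$ fluid gap coming from stochastic budget depletion—showing that no policy can identify the instance within $o(\sqrt{T})$ rounds while each misidentified round costs $\Theta(1/\sqrt{T})$. This recovers the minimax $\tilde\Theta(\sqrt{T})$ floor for stationary budgeted first-price auctions in \citet{ai2022no,wang2023learning} and yields $R_T(\pi)\geq\Omega(\sqrt{T})$ at $V_T=0$.

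The crux is the $V_T$ term, and here I exploit that non-stationarity makes the per-period distributions unlearnable: each $F_t$ supplies a single sample, so a policy's behaviour on an early block of periods cannot depend on the private-value realizations of a later block. I take a degenerate, fully-revealed competitor distribution ($m_t\equiv a$, so $G\equiv 1$ on $[a,b]$ and a win costs exactly the submitted bid) and a budget $B=\tfrac{a}{2}T$ that suffices to win only half the auctions, making the constraint bind. I then pair an early block of $T/2$ moderate-surplus opportunities ($v_t=a+s$, identical across both instances) with a late block on which two instances diverge: in $\mathcal{I}^{+}$ the late opportunities have high surplus ($v_t=a+2s$), while in $\mathcal{I}^{-}$ they have zero surplus ($v_t=a$). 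I fix the single prediction $\bhrho$ that concentrates spending on the late block ($\hrho_t=0$ early, $\hrho_t=a$ late), which is exact for $\mathcal{I}^{+}$ (so $V_T=0$) and maximally wrong for $\mathcal{I}^{-}$ (so $V_T=\Theta(aT)$). Because the early block is statistically identical across instances, any policy commits the same expected early-spending fraction $p$ in both, and a direct accounting gives regret $\geq\tfrac{T}{2}s\,p$ on $\mathcal{I}^{+}$ and $\geq\tfrac{T}{2}s\,(1-p)$ on $\mathcal{I}^{-}$. Averaging over the two (Yao's principle), the larger regret is at least $\tfrac{T}{4}s$; choosing $s=\Theta(a)$ makes this $\Theta(V_T)$ on the $\mathcal{I}^{-}$ branch, and scaling the number of paired early/late periods lets $V_T$ sweep $[\Theta(\sqrt{T}),\Theta(T)]$, delivering $R_T(\pi)\geq\Omega(V_T)$ throughout. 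This also sits consistently with the upper bound of \Cref{thrm:informative}: a policy that follows $\bhrho$ ($p$ small) pays $\Theta(V_T)$ on $\mathcal{I}^{-}$ but only $O(\sqrt{T})$ on $\mathcal{I}^{+}$.

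The delicate part is \emph{calibration}: forcing regret exactly $\Theta(V_T)$ rather than $\Omega(1)$ requires the budget to bind so that early spending genuinely forecloses late wins, and the surplus gap $s$ to be of the same order as the per-period budget $a$, so that $\sum_t|\rho_t-\hrho_t|$ and the incurred surplus loss scale together. A second subtlety is that the lower bound must survive a policy that ignores $\bhrho$ and learns from full-information feedback; this is neutralized by making $G$ degenerate (hence trivially learned and irrelevant) and placing all hardness in the unlearnable non-stationary value sequence, so the identical early block truly pins down early decisions. Combining the two terms into one witnessing instance is then routine—put the $\sqrt{T}$ and $V_T$ gadgets on disjoint period blocks, sum the independent regrets, and apply $\sqrt{T}+V_T\leq 2\max\{\sqrt{T},V_T\}$. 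I expect the simultaneous calibration of budget, $s$, and the number of paired periods to be the main obstacle, whereas indistinguishability itself is immediate from the identical-prefix design.
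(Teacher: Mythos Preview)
Your approach is essentially the paper's: both split off the $\sqrt{T}$ term by citing the stationary lower bound and handle the $V_T$ term via a two-instance construction with an identical prefix, a degenerate (point-mass) competitor distribution, and a binding budget that forces an early-versus-late spending tradeoff. The paper takes $m_t\equiv\tfrac12$, $B=T/4$, values $v_t\in\{\tfrac34,\tfrac78,\tfrac58\}$, and an alternating prediction $\hrho_t\in\{\tfrac12,0\}$ on odd/even $t$; you take $m_t\equiv a$, values $v_t\in\{a+s,a+2s,a\}$, and a block prediction $\hrho_t\in\{0,a\}$. The accounting that yields regrets $\tfrac{T}{2}s\,p$ and $\tfrac{T}{2}s\,(1-p)$ on the two instances is the same computation the paper performs.

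The one substantive difference is how $V_T$ is calibrated across the two hard instances. The paper's alternating prediction is chosen so that \emph{both} scenarios carry the \emph{same} prediction error (equal to the target $V_T/2$), which directly gives a parametric lower bound for every $V_T\in[0,T/2]$. In your construction $\mathcal{I}^{+}$ has $V_T=0$ while $\mathcal{I}^{-}$ has $V_T=\Theta(T)$, so the proposition follows only after a case split on $p$: if $p\le\tfrac12$ the witness is $\mathcal{I}^{-}$ with regret $\Theta(V_T)$, and if $p>\tfrac12$ the witness is $\mathcal{I}^{+}$ with regret $\Theta(T)\gg\sqrt{T}+0$. This is correct for the proposition as stated, but your remark about ``scaling the number of paired early/late periods'' to sweep intermediate $V_T$ values is left vague; the paper's device of matching $V_T$ on both instances sidesteps this issue entirely and is worth adopting if you want the full parametric statement.
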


\section{Beyond Global Constraints: Per-Period Budget Allocation}\label{discussion}

In \Cref{sec:known_F}, we demonstrated that a predicted budget allocation plan $\hrho_t$ can effectively mitigate the impact of non-stationarity, reducing the regret bound from $\tilde{O}(\sqrt{T} + \mathcal{W}_T)$ to $\tilde{O}(\sqrt{T} + V_T)$. 
However, the benchmark $\VOPT$ only enforces a global budget constraint and does not require per-period spending limits implied by the plan. In many practical scenarios, advertisers are held accountable to periodic spending targets, and exceeding them may incur penalties or operational issues.

\subsection{Benchmark with Per-Period Budget Allocation}

To better evaluate the alignment between an algorithm’s behavior and a given allocation plan, we now introduce a refined benchmark $\VOPT_{\text{plan}}(\bhrho)$ that incorporates per-period expected expenditure constraints. 
This formulation offers several advantages: (i) it isolates the performance loss attributable to misalignment with the plan; (ii) it enables a cleaner regret bound that eliminates dependence on the prediction error; and (iii) it allows us to assess the robustness of our policy when the benchmark itself is permitted to deviate from the plan.
Given a budget allocation plan $\bhrho=\{\hrho_t\}_{t=1}^T$, we define the following optimization problem:
\begin{equation}\label{eq:optplan}
\begin{alignedat}{2}
\VOPT_{\text{plan}}(\bhrho) \triangleq \  & \underset{\pi \in \Pi_0}{\text{max}}
& \ & \sum_{t\in [T]} \mathbb{E}_{v_t, m_t}\Big[\left(v_t - x_t^\pi\right) \mathbf{1}[x_t^\pi \geq m_t]\Big]  \\
& \text{s.t.}
& \ & \mathbb{E}_{m_t}\Big[x_t^\pi \mathbf{1}[x_t^\pi \geq m_t]\Big] \leq \hrho_t, \forall t \in [T]
\end{alignedat}
\end{equation}
where $\Pi_0 \supseteq \Pi$ denotes the set of non-anticipative bidding policies that, beyond the current private value $v_t$ and all the historical observations, know the true distributions $G(\cdot)$ and $F_t(\cdot)$.

This new benchmark is stricter than the original $\VOPT$. The key distinction between the original benchmark $\VOPT$ and the refined benchmark $\VOPT_{\text{plan}}(\bhrho)$ lies in the nature of the budget constraint. While $\VOPT$ only requires that the total expected spending does not exceed the budget $B$, $\VOPT_{\text{plan}}(\bhrho)$ imposes a per-period discipline, requiring the expected spending in each time period to stay within the allocated amount $\hrho_t$. 
This formulation ensures that both our algorithm and the benchmark are evaluated under the same periodic spending constraints. Intuitively, a well-designed budget allocation plan should be both feasible and capable of supporting high cumulative rewards; if the benchmark achieves strong performance while respecting the plan, it serves as evidence that the plan itself is sensible. If a plan $\bhrho$ is well-designed, then $\VOPT_{\text{plan}}(\bhrho)$ should be close to $\VOPT$; analyzing the gap between our algorithm and this stricter benchmark provides a more nuanced understanding of its ability to follow a spending schedule.

A key advantage of this refined benchmark is that it allows us to derive a regret bound for our algorithm that depends solely on the time horizon $T$, with no dependence on the prediction error, as formalized in the following theorem:
\begin{thrm}\label{thrm:discussion1}
Consider \Cref{alg:1} with predictions $\bhrho=\{\hrho_t\}_{t=1}^T$,
step size $\eta = \frac{1}{\sqrt{T}}$, and initial dual variable $\mu_1 \leq \frac{b}{a} + b$. The performance of this policy, denoted by $\Vpi$, satisfies
\begin{equation*}
\VOPT_{\text{plan}}(\bhrho) - \Vpi \leq O\left(\sqrt{T\ln T}\right).
\end{equation*}
\end{thrm}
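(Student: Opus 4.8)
The plan is to mirror the dual-based analysis behind \Cref{thrm:noninformative} and \Cref{thrm:informative}, but to exploit the fact that the per-period constraints in \eqref{eq:optplan} make the refined benchmark decouple \emph{exactly} along the same subproblems $D_t(\mu)$ that drive \Cref{alg:1}. Concretely, I would first relax each of the $T$ constraints $\E_{m_t}[x_t^\pi\mathbf{1}[x_t^\pi\ge m_t]]\le\hrho_t$ with its own multiplier. Because the constraints no longer couple across periods, the Lagrangian separates completely and a period-by-period weak-duality argument (the per-period analogue of \Cref{lemma:weak-duality}) yields $\VOPT_{\text{plan}}(\bhrho)\le\sum_{t\in[T]}\min_{\mu\ge0}D_t(\mu)\le\sum_{t\in[T]}D_t(\mu_t)$, where the last inequality merely uses that the minimizer can be replaced by the dual iterates $\{\mu_t\}$ produced by the algorithm. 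The conceptual payoff is that $\hrho_t$ is now literally the right-hand side of the benchmark's constraint rather than a prediction of an unknown $\rho_t$; this is exactly what removes the prediction-error term $V_T$ present in \Cref{thrm:informative}.

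The second step is to bound $\E[\sum_t D_t(\mu_t)]-\Vpi$. Writing $x^*(v_t,\mu_t)$ for the optimal bid under the true $G(\cdot)$ and current dual $\mu_t$, I would add and subtract the Lagrangian value of the algorithm's realized bid $x_t^\pi$ to split each summand as
\[
D_t(\mu_t)-\E_{v_t}\!\big[(v_t-x_t^\pi)G(x_t^\pi)\big]
=\underbrace{\E_{v_t}\!\big[(v_t-(1+\mu_t)x^*(v_t,\mu_t))G(x^*(v_t,\mu_t))-(v_t-(1+\mu_t)x_t^\pi)G(x_t^\pi)\big]}_{\text{optimization gap}\ \ge 0}
+\ \mu_t\big(\hrho_t-x_t^\pi G(x_t^\pi)\big).
\]
The optimization gap is controlled by the estimation error of $G_t$: since $\tilde x_t$ maximizes $(v_t-(1+\mu_t)x)G_t(x)$ while $x^*(v_t,\mu_t)$ maximizes the same expression with the true $G$, a ``$\sup$-norm sandwich'' bounds the gap by a constant multiple of $(1+\mu_t)\,b\,\|G_t-G\|_\infty$, and the Dvoretzky--Kiefer--Wolfowitz inequality on the $t-1$ i.i.d.\ samples gives $\E\|G_t-G\|_\infty=O(\sqrt{\ln t/t})$, so the sum over $t$ contributes $O(\sqrt{T\ln T})$ (here I would also verify, as in the earlier theorems, that $\mu_t$ stays uniformly bounded given $\mu_1\le b/a+b$ and $\eta=1/\sqrt T$). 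For the second term, taking expectations replaces $x_t^\pi G(x_t^\pi)$ by $\E_{m_t}[x_t^\pi\mathbf{1}[x_t^\pi\ge m_t]]$, so it equals $\E[\mu_t g_t]$; the standard online-gradient-descent inequality with comparator $\mu=0$ then gives $\sum_t\mu_t g_t\le \mu_1^2/(2\eta)+\tfrac{\eta}{2}\sum_t g_t^2=O(\sqrt T)$, since $g_t$ is bounded.

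The remaining, and I expect hardest, step is the optimization-gap contribution from periods where the \emph{global} budget of \Cref{alg:1} is exhausted and the algorithm is forced to bid $x_t^\pi=0$ instead of $\tilde x_t$; there the gap is no longer a small estimation error. I would isolate it through the alternate-system device already used for \Cref{thrm:noninformative}: run the same dynamics but let the budget go negative, bound the benchmark-to-alternate gap by the two terms above, and bound the alternate-to-real gap by the reward foregone on depleted periods. Because $\sum_t\hrho_t=B$ and the dual variable is self-regulating---whenever realized spending runs ahead of the plan the subgradients $g_t$ turn negative and push $\mu_t$ up, shading later bids down---the cumulative realized spend concentrates around $B$ with only $O(\sqrt T)$ fluctuations. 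Quantifying this requires a martingale/concentration bound on $\sum_t\big(x_t^\pi\mathbf{1}[x_t^\pi\ge m_t]-x_t^\pi G(x_t^\pi)\big)$ together with the telescoping relation $\mu_1-\mu_{T+1}\le\eta\sum_t g_t$, which links the bounded terminal iterate to the net discrepancy between $B$ and the total spend and thereby caps the potential overspend at $O(\sqrt T)$; converting an $O(\sqrt T)$ overspend into foregone reward costs only a constant factor, since each winning auction yields reward at most $b$ while consuming at least $a$. Collecting the $O(\sqrt{T\ln T})$ estimation term, the $O(\sqrt T)$ dual-descent term, and the $O(\sqrt T)$ depletion term gives the claimed bound.
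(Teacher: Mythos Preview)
Your proposal is correct and follows essentially the same route as the paper: per-period weak duality on the decoupled Lagrangian $\sum_t L_t(\mu)$ to upper-bound $\VOPT_{\text{plan}}(\bhrho)$ (which is precisely what kills the $V_T$ term), DKW to control the $G$-estimation gap, the OGD telescoping inequality for $\sum_t\mu_t g_t$, and the alternate (unconstrained-budget) system to isolate the depletion loss. One simplification: the martingale/concentration bound you propose for the overspend is unnecessary---since $g_t=\hrho_t-z_t^{\sss R}$ enters the dual update directly, the pathwise telescoping $\mu^{\sss R}_{T+1}\ge \mu_1-\eta\sum_t g_t$ together with the uniform bound $\mu^{\sss R}_{T+1}\le b/a+b$ already yields $\sum_t z_t^{\sss R}-B\le O(1/\eta)=O(\sqrt{T})$ deterministically.
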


\subsection{Robustness Under Baseline Deviations}
Our previous analysis assumes that the baseline policy (benchmark) strictly adheres to the budget allocation plan $\bhrho$. However, in a competitive and dynamic environment, an optimal policy may occasionally find it beneficial to overspend relative to the plan when exceptionally valuable opportunities arise. To evaluate the robustness of our algorithm in such realistic scenarios, we now consider a relaxed benchmark $\VOPT_{\text{plan}}(\bhrho, \bepi)$ that allows for a baseline violation $\epsilon_t$ for each time period. This benchmark bridges the gap between the strict per-period constraints of $\VOPT_{\text{plan}}(\bhrho)$ and the global constraint of $\VOPT$: when $\epsilon_t's$ are large enough to cover all periods, $\VOPT_{\text{plan}}(\bhrho, \bepi)$ reduces to $\VOPT$.
This analysis demonstrates that our algorithm's performance gracefully degrades with the total allowable violation, confirming its stability even when the benchmark is given more flexibility.

Suppose we give the baseline an error $\epsilon_t \geq 0$ for each $t \in [T]$ and we allow the baseline to violate the per-round costraint by the given error. This leads to the following relaxed optimization problem:
\begin{equation}\label{eq:optplan_error}
\begin{alignedat}{2}
& \VOPT_{\text{plan}}(\bhrho, \bepi) \triangleq \\ 
& \underset{\pi \in \Pi_0}{\max} \ \sum_{t\in [T]} \mathbb{E}_{v_t, m_t}\Big[\left(v_t - x_t^\pi\right) \mathbf{1}[x_t^\pi \geq m_t]\Big]  \\
& \text{s.t.} \quad \mathbb{E}_{m_t}\Big[x_t^\pi \mathbf{1}[x_t^\pi \geq m_t]\Big] \leq \hrho_t + \epsilon_t, \forall t \in [T] \\
& \quad \quad \sum_{t\in [T]} \mathbb{E}_{m_t}\Big[x_t^\pi \mathbf{1}[x_t^\pi \geq m_t]\Big] \leq B
\end{alignedat}
\end{equation}

Therefore the regret of \Cref{alg:1} can be analyzed under this relaxed benchmark; the only modification to the regret bound is an additional term that scales with the total allowable violation $\sum_{t\in [T]}\epsilon_t$. We show the result as follows:
\begin{thrm}\label{thrm:discussion2}
Consider \Cref{alg:1} with predictions $\bhrho=\{\hrho_t\}_{t=1}^T$, violation errors $\bepi=\{\epsilon_t\}_{t=1}^T$, step size $\eta = \frac{1}{\sqrt{T}}$, and initial dual variable $\mu_1 \leq \frac{b}{a} + b$. The performance of our policy satisfies
\begin{equation*}
\VOPT_{\text{plan}}(\bhrho, \bepi) - \Vpi \leq O\left(\sqrt{T\ln T} + \sum_{t\in [T]}\epsilon_t\right)
\end{equation*}
\end{thrm}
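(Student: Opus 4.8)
The plan is to reduce Theorem~\ref{thrm:discussion2} to Theorem~\ref{thrm:discussion1} by isolating the effect of the violation allowance $\bepi$ entirely on the benchmark side. Since \Cref{alg:1} receives the same input $\bhrho$ in both theorems (the errors $\bepi$ only relax the benchmark, not the policy), the quantity $\Vpi$ is identical across the two statements. I would therefore decompose
\begin{equation*}
\VOPT_{\text{plan}}(\bhrho, \bepi) - \Vpi = \left[\VOPT_{\text{plan}}(\bhrho, \bepi) - \VOPT_{\text{plan}}(\bhrho)\right] + \left[\VOPT_{\text{plan}}(\bhrho) - \Vpi\right],
\end{equation*}
where the second bracket is $O(\sqrt{T\ln T})$ directly by Theorem~\ref{thrm:discussion1}. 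It then remains only to show that the \emph{benchmark gap} in the first bracket is $O\!\left(\sum_t \epsilon_t\right)$.

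The core of the argument is a down-scaling construction applied to the optimal relaxed policy. Let $\pi^*$ attain $\VOPT_{\text{plan}}(\bhrho, \bepi)$, and write $R_t^* = \E_{v_t,m_t}[(v_t - x_t^{\pi^*})\mathbf{1}[x_t^{\pi^*}\geq m_t]]$ and $S_t^* = \E_{v_t,m_t}[x_t^{\pi^*}\mathbf{1}[x_t^{\pi^*}\geq m_t]]$ for its per-period expected reward and spending, so that $S_t^* \leq \hrho_t + \epsilon_t$. The key observation is a pointwise reward-to-spending bound: whenever the bidder wins we have $x_t \geq m_t \geq a$, hence $a v_t \leq a b \leq b x_t$, which rearranges to $(v_t - x_t)\mathbf{1}[x_t \geq m_t] \leq \tfrac{b-a}{a}\, x_t \mathbf{1}[x_t \geq m_t]$ (both sides vanish when the bidder loses). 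Taking expectations gives $R_t^* \leq \tfrac{b-a}{a}\, S_t^*$ for every $t$, which is what converts a multiplicative spending slack into an additive reward loss.

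Next I would build a policy feasible for the strict benchmark $\VOPT_{\text{plan}}(\bhrho)$ by mixing $\pi^*$ with abstention. Concretely, in period $t$ follow $\pi^*$ with probability $q_t \triangleq \min\{1,\ \hrho_t/S_t^*\}$ and bid $0$ with probability $1-q_t$; bidding $0$ never wins (since $0 < a \leq m_t$), yielding outcome $(0,0)$, and I set $q_t = 1$ whenever $S_t^* \leq \hrho_t$, which also covers the degenerate case $S_t^* = 0$. This mixed policy lies in $\Pi_0$, has per-period expected spending $q_t S_t^* \leq \hrho_t$, and is therefore feasible for $\VOPT_{\text{plan}}(\bhrho)$. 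Its per-period reward loss relative to $\pi^*$ is $(1-q_t)R_t^*$, which is $0$ when $S_t^* \leq \hrho_t$ and otherwise equals $\tfrac{S_t^*-\hrho_t}{S_t^*}R_t^* \leq \tfrac{\epsilon_t}{S_t^*}\cdot\tfrac{b-a}{a}S_t^* = \tfrac{b-a}{a}\epsilon_t$ by the ratio bound. Summing over $t$ and invoking optimality of the strict benchmark gives $\VOPT_{\text{plan}}(\bhrho) \geq \sum_t q_t R_t^* \geq \VOPT_{\text{plan}}(\bhrho,\bepi) - \tfrac{b-a}{a}\sum_t\epsilon_t$, so the benchmark gap is $O\!\left(\sum_t\epsilon_t\right)$; combining the two brackets finishes the proof.

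I expect the main obstacle to lie in the benchmark-gap step rather than the reduction. The delicate points are verifying that the down-scaled mixture is a legitimate member of $\Pi_0$ (randomized abstention), that the global constraint $\sum_t S_t \leq B$ in \eqref{eq:optplan_error} plays no active role once per-period spending is capped at $\hrho_t$ (using $\sum_t\hrho_t = B$), and that the case $S_t^* = 0$ is handled so the ratio bound is never divided by zero. The reward-to-spending inequality $R_t^* \leq \tfrac{b-a}{a}S_t^*$, which crucially exploits the reserve price $a>0$, is the linchpin of the whole argument.
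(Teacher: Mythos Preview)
Your proposal is correct and takes a genuinely different route from the paper. The paper does \emph{not} reduce to Theorem~\ref{thrm:discussion1}; instead it reruns the alternate-system analysis of \Cref{proof5.1} against the relaxed benchmark $\VOPT_{\text{plan}}(\bhrho,\bepi)$, and the extra $\sum_t\epsilon_t$ term appears when weak duality is invoked: the per-period Lagrangian now carries $\hrho_t+\epsilon_t$ rather than $\hrho_t$, contributing $\muR_t\epsilon_t\leq(\tfrac{b}{a}+b)\epsilon_t$ via the uniform bound on $\muR_t$ (Lemma~\ref{lemma:bdd-mu}). By contrast, you isolate the slack entirely on the benchmark side through a \emph{primal} down-scaling construction, proving $\VOPT_{\text{plan}}(\bhrho,\bepi)-\VOPT_{\text{plan}}(\bhrho)\leq\tfrac{b-a}{a}\sum_t\epsilon_t$ and then citing Theorem~\ref{thrm:discussion1} as a black box. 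Your argument is more modular (no need to reopen the dual-descent analysis) and yields a sharper constant $\tfrac{b-a}{a}$ versus the paper's $\tfrac{b}{a}+b$; the paper's approach, on the other hand, avoids constructing an auxiliary policy and makes transparent exactly where the dual variable absorbs the violation. One minor point to tidy in your write-up: the action set is formally $[a,b]$, so ``bid $0$'' should be phrased as abstention (which the paper itself uses in \Cref{alg:1}); since $m_t\geq a$, any bid strictly below $a$ loses, so the substance is unaffected.
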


In summary, this section has moved beyond the aggregate prediction error $V_T$ to examine the granular execution of a budget plan. By introducing a per-period constrained benchmark, we have shown that our dual-gradient-based algorithm can learn to follow a spending schedule, achieving a clean $\tilde{O}(\sqrt{T})$ regret bound. Moreover, we establish robustness guarantee when the benchmark itself is allowed to deviate from the plan. Collectively, these results provide a more realistic evaluation of algorithmic performance under periodic spending discipline, bridging the gap between theoretical guarantees and operational requirements in real-world advertising markets.

\section{Numerical Experiments}\label{numerical}

In this section, we conduct numerical experiments to evaluate the empirical performance and efficiency of our proposed algorithms. The experimental setup is as follows.
The private values $v_t$ are drawn independently from distributions $F_t$, each specified as a uniform distribution with mean $\mu_t$ and standard deviation $\sigma_t$, where $\mu_t$ and $\sigma_t$ are randomly generated from $[1,2]$. The highest competitor bid $m_t$ is drawn i.i.d. from a uniform distribution $G$ on $[1,2]$. The time horizon $T$ ranges from $100$ to $1000$, and the total budget is set to $B=0.2 T$. We evaluate performance using the \textit{relative error}, which is defined as $ (V_{\text{opt}}-V_{\text{un}}) / V_{\text{opt}}$ in the uninformative case and $ (V_{\text{opt}}-V_{\text{in}}) / V_{\text{opt}}$ in the informative case respectively, where $V_{\text{opt}}$ denote the offline optimal benchmark, and $V_{\text{un}}$ and $V_{\text{in}}$ denote the expected total reward (performance) collected by \Cref{alg:1} under the uninformative case and the informative case respectively.

\noindent \textbf{Experiment 1: the relationship between relative error and time horizon}.
We implement \Cref{alg:1} for both the uninformative and informative settings and compute the relative error. Each experiment is repeated $K=1000$ times, and the average performance is reported. For simplicity, we assume that the prediction in the informative setting is perfectly accurate, i.e., $V_T=0$.
\begin{figure}
    \centering
    \includegraphics[width=12cm]{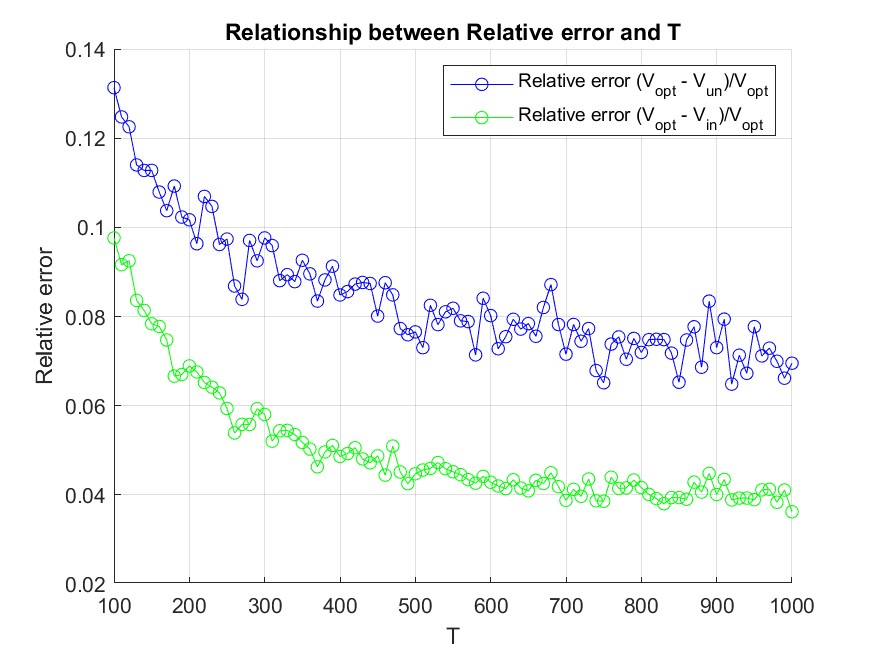}
    \caption{Relationship between Average Relative Error and Time Horizon}
    \label{fig:1}
\end{figure}

As shown in \Cref{fig:1}, the performance of both online algorithms closely tracks that of the offline optimal benchmark. The relative error decreases toward zero as the time horizon increases, consistent with the theoretical guarantee of sublinear regret. These results demonstrate the strong empirical performance of our algorithms. Moreover, in the informative setting with $V_T=0$, the algorithm achieves lower relative error than its uninformative counterpart, highlighting the tangible benefits of incorporating a budget allocation prediction.

\noindent \textbf{Experiment 2: the relationship between relative error and deviation measure}.
We now examine how the degree of non-stationarity affects the performance of \Cref{alg:1} in the uninformative setting. Fix the time horizon at $T = 200$. For the first half of the horizon $t = 1,\cdots,\frac{T}{2}$, we set the mean of the private value distribution to a constant $\mu$; for the second half $t = \frac{T}{2}+1, \cdots, T$, we set the mean to $\mu+\mathcal{W}_T/T$, where $\mathcal{W}_T$ controls the magnitude of the distribution shift. All other parameters follow the setup described above. We vary $\mathcal{W}_T$, repeat each configuration $K=1000$ times, and compute the average relative regret.
\begin{figure}
    \centering
    \includegraphics[width=12cm]{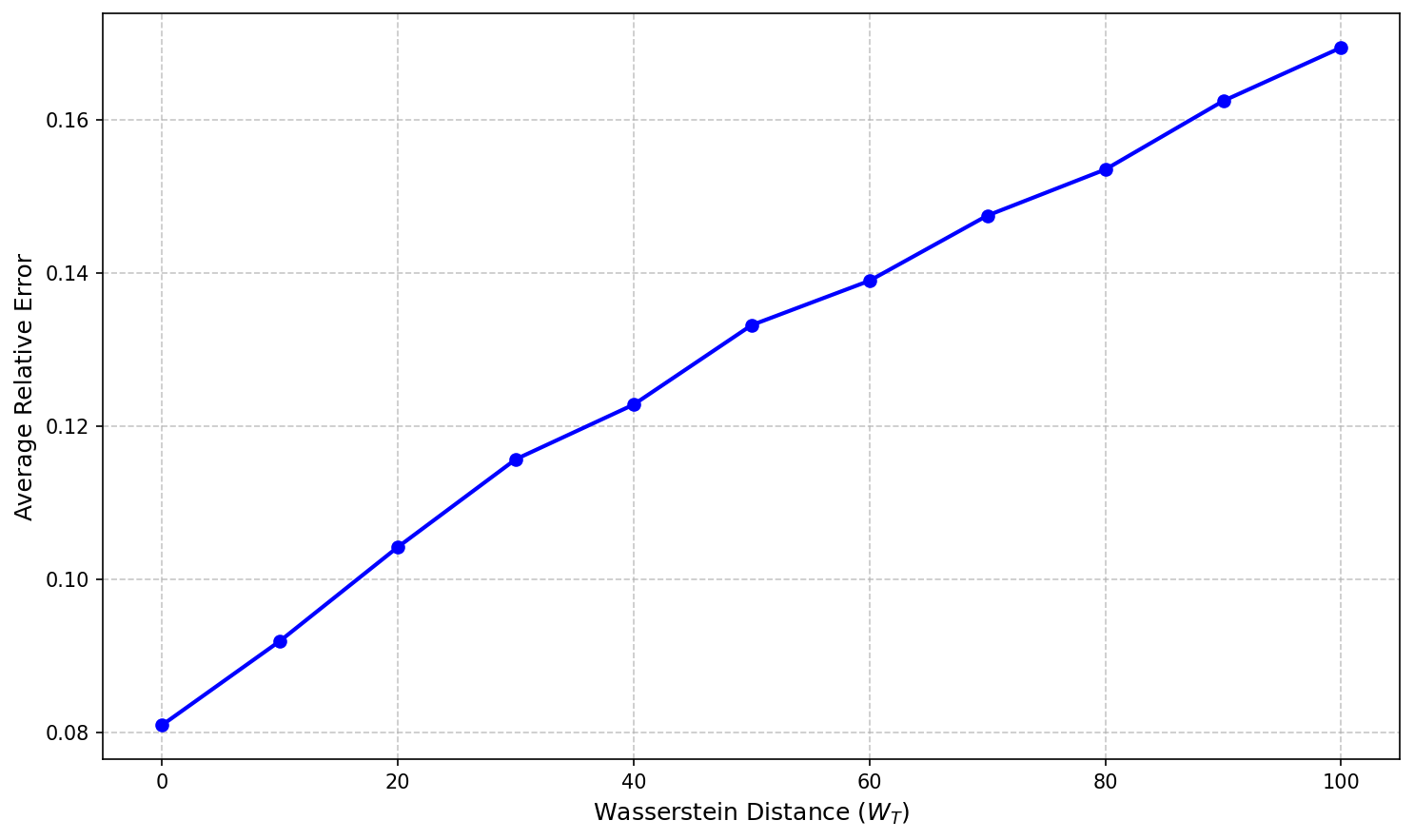}
    \caption{Relationship between Average Relative Error and Wasserstein Distance $\mathcal{W}_T$}
    \label{fig:2}
\end{figure}

\Cref{fig:2} shows that the relative regret increases monotonically with the Wasserstein deviation $\mathcal{W}_T$ increases. This upward trend aligns with the theoretical guarantee in \Cref{thrm:noninformative}. The result confirms that under the uninformative setting, greater non-stationarity leads to a larger performance gap relative to the offline optimum.

\noindent \textbf{Experiment 3: the relationship between relative error and prediction errors}.
We next investigate the sensitivity of the informative algorithm to inaccuracies in the budget allocation prediction. Fix $T = 200$. For each period $t$, we compute the optimal budget allocation $\rho_t$ as defined in \eqref{def:rho1}. The decision maker, however, does not observe $\rho_t$ directly and instead receives a prediction $\hrho_t = \rho_t - \epsilon$ for some $\epsilon>0$. The total prediction error is therefore $V_T=T\cdot\epsilon$. We vary $\epsilon$, run $K=1000$ repetitions for each value, and record the average relative regret.
\begin{figure}
    \centering
    \includegraphics[width=12cm]{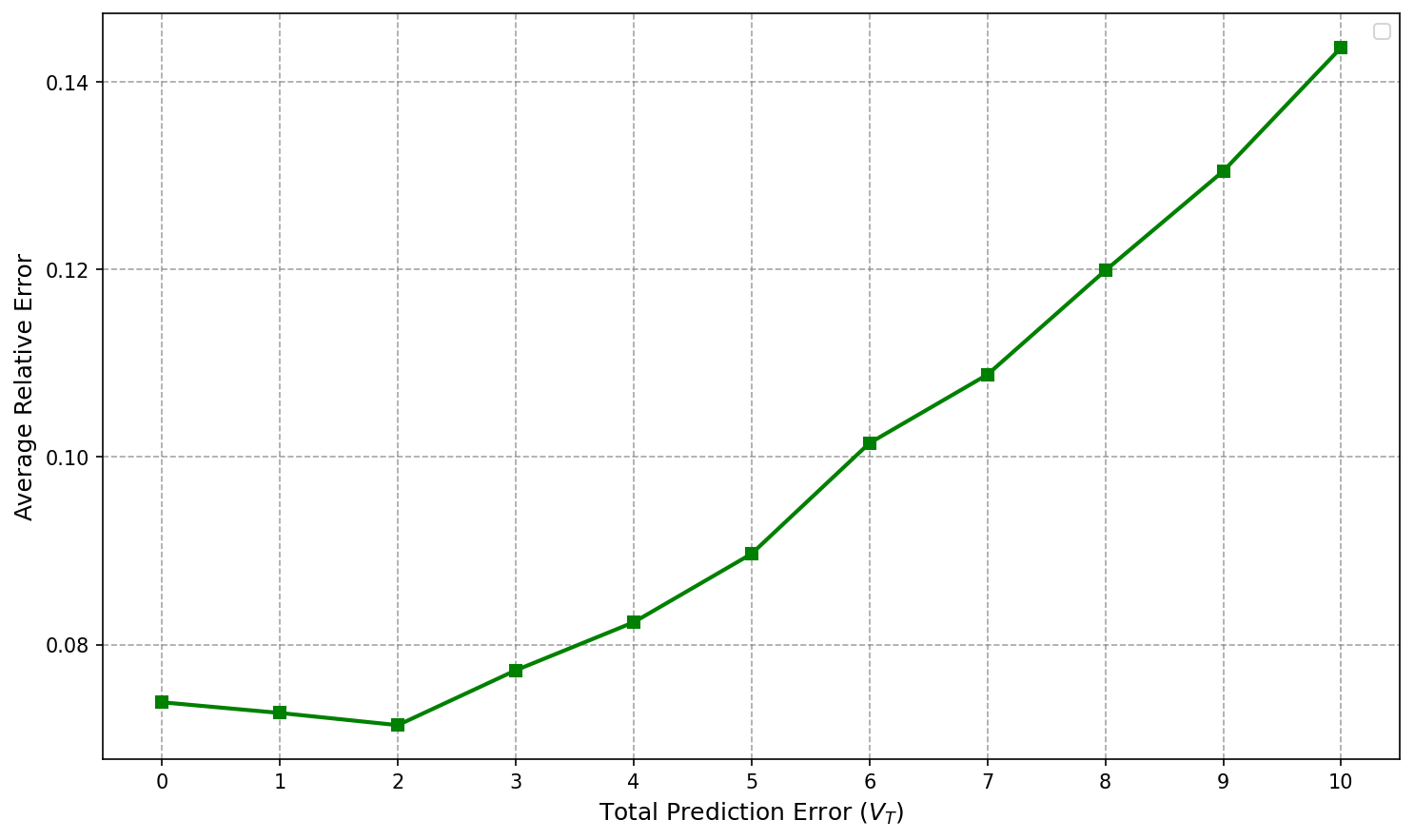}
    \caption{Relationship between Average Relative Error and Prediction Error $V_T$}
    \label{fig:3}
\end{figure}

\Cref{fig:3} plots the relative regret as a function of $V_T$. The performance deteriorates with increasing prediction error, corroborating the linear dependence on $V_T$ in the regret bound of \Cref{thrm:informative}. These results underscore the value of accurate budget allocation forecasts in the informative setting.

\section{Conclusion}\label{conclusion}
This paper studies adaptive bidding in repeated first-price auctions under a budget constraint and non-stationary private values. We propose a simple and computationally efficient dual-gradient-descent-based algorithm that maintains an online estimate of the optimal Lagrangian multiplier and the competitor bid distribution, requiring minimal prior information.

We analyzed two settings. In the uninformative setting, a uniform budget allocation yields order-optimal regret $\tilde{O}(\sqrt{T} + \mathcal{W}_T)$, where $\mathcal{W}_T$ is a Wasserstein-based measure of non-stationarity. In the informative setting, where a predicted budget allocation is available, the regret improves to $\tilde{O}(\sqrt{T} + V_T)$ with prediction error $V_T$, and a matching lower bound is established. We further introduced a refined benchmark incorporating per-period expenditure constraints. Against this benchmark, our algorithm attains $\tilde{O}(\sqrt{T})$ regret, eliminating the prediction error term.
We also characterize the algorithm's robustness when the baseline deviates from the allocation plan, offering a unified perspective that bridges global and per-period constrained benchmarks.

Numerical experiments corroborate our theoretical findings: the relative regret decays with $T$, grows with the Wasserstein deviation $\mathcal{W}_T$, and increases with the prediction error $V_T$. These results confirm the practical relevance and effectiveness of our approach.

Several directions merit future investigation. Extending the framework to more general feedback structures—such as winning-bid-only or binary feedback—would broaden its applicability. Incorporating multiple competing bidders with strategic behavior could lead to richer equilibrium learning dynamics. Finally, developing data-driven methods to construct accurate budget allocation predictions from historical data, possibly with theoretical guarantees on the prediction error, is an important step toward deploying these algorithms in real-world advertising platforms.

%\THEEndNotes
%\begingroup \parindent 0pt \parskip 0.0ex \def\enotesize{\normalsize} \theendnotes \endgroup

% Appendix here
% Options are (1) APPENDIX (with or without general title) or
%             (2) APPENDICES (if it has more than one unrelated sections)
% Outcomment the appropriate case if necessary
%
% \begin{APPENDIX}{<Title of the Appendix>}
% \end{APPENDIX}
%
%   or
%
% \begin{APPENDICES}
% \section{<Title of Section A>}
% \section{<Title of Section B>}
% etc
% \end{APPENDICES}

% Acknowledgments here
%\ACKNOWLEDGMENT{We would like to express our sincere gratitude to [acknowledge individuals, organizations, or institutions] for their invaluable contributions to this research. We are also grateful to [mention any additional acknowledgements, such as technical assistance, data providers, or colleagues] for their support and assistance throughout the course of this work.}

% \newpage
% References here (outcomment the appropriate case)

% CASE 1: BiBTeX used to constantly update the references
%   (while the paper is being written).
\bibliographystyle{informs2014} % outcomment this and next line in Case 1
%\bibliography{<your bib file(s)>} % if more than one, comma separated

%\bibliographystyle{informs2014} % outcomment this and next line in Case 1
%\bibliography{sample} % if more than one, comma separated

% CASE 2: BiBTeX used to generate mypaper.bbl (to be further fine tuned)
%\input{mypaper.bbl} % outcomment this line in Case 2

%If you don't use BiBTex, you can manually itemize references as shown below.

%\bibliographystyle{nonumber}
\bibliography{literature}

%\begin{thebibliography}
%\providecommand{\natexlab}[1]{#1}
%\providecommand{\url}[1]{\texttt{#1}}
%\providecommand{\urlprefix}{URL }

%\bibitem[{Smith(2005)}]{smith2005}
%Smith J (2005) Optimal resource allocation in humanitarian logistics.
%  \emph{Journal of Operations Research} 30(2):123--135.
  
%\bibitem[{Jones(2010)}]{jones2010}
%Jones S (2010) Stochastic programming models for humanitarian logistics.
%  \emph{INFORMS Mathematics of Operations Research} 35(4):567--580.

%\bibitem[{Brown(2015)}]{brown2015}
%Brown D (2015) \emph{Introduction to Stochastic Programming} (Springer).

%\end{thebibliography}

\newpage
\onecolumn
\setcounter{section}{0}
\renewcommand\thesection{\Alph{section}}

\section{Proof of Theorem \ref{thrm:noninformative}}\label{proof3.1}

We consider an alternate system and the performance of \Cref{alg:1} in the alternate system provides a lower bound on the performance $\Vpi$
of \Cref{alg:1} in the original system.
Therefore, we can bound $\Vpi$ from below by bounding the performance of \Cref{alg:1} in the alternate system.

Specifically, we consider an alternate system where (i) there is no budget constraint -- i.e., the remaining budget can go negative;
however, (ii) whenever the payment exceeds the remaining budget, i.e., $z_t > B_t$, a penalty $b$ (which is an upper bound on
the private values) will occur, making the net reward negative.

In the following, we use a superscript $R$ to denote the dynamics in the alternate system. We let $\BR_t$ denote the value of the remaining budget at the beginning of period $t$ of the alternate system,
and we let $\xR_t$, $\zR_t$, $\gR_t = \rho - \zR_t$, and $\muR_t$, respectively,  denote the bid, consumption, gradient, and dual variable in period $t$ of the alternate system.
Note that since there is no budget constraint in the alternate system, \Cref{alg:1} always proceeds with
$\xR_t \triangleq \argmax_{x \in [a,b]} \big(v_t - (1 + \muR_t) x\big) {G}_t(x)$ and $\muR_{t+1} = (\muR_t - \eta \gR_t)^{+}$ for all periods $t \in [T]$.

Note that the cumulative reward of \Cref{alg:1} is lower in the alternate system than in the original system for every sample path. To see this, consider a critical period $\tau = \min\{t: \zR_t > \BR_t\}$, which is the first time that the remaining budget in the alternate system becomes negative.
Note that the two systems collect the same reward in each period until period $\tau - 1$. Moving forward, the bidder in the original system still collects a nonnegative reward in each remaining period. In contrast, the net reward in each remaining period of the alternate system is always nonpositive because of the penalty $-b$ of winning an auction when the budget is exhausted.

Since the performance of \Cref{alg:1} in the alternate system is a lower bound on the performance $\Vpi$ of \Cref{alg:1} in the original system, we have
\begin{equation*}
\begin{aligned}
\Vpi & \geq \E\Bigg[\sum_{t \in [T]}(v_t - \xR_t) \mathbf{1}[\xR_t \geq m_t] \\
& \quad - \sum_{t \in [T]}b \cdot \mathbf{1}[\xR_t \geq m_t]\mathbf{1}\big[\xR_t > \BR_t\big]\Bigg] \\
& \geq \E\left[\sum_{t \in [T]}(v_t - \xR_t) G(\xR_t)\right]\\
& \quad - b \cdot \E\left[ \frac{(\sum_{t=1}^T \zR_t - B)^{+}}{a} + 1\right].
\end{aligned}
\end{equation*}

As a result,
\begin{equation}\label{eq:tmp1}
\begin{aligned}
\VOPT - \Vpi \leq \ & \underbrace{\VOPT - \E\left[\sum_{t \in [T]}(v_t - \xR_t) G(\xR_t) \right]}_{(a)} \\
& + \underbrace{b \cdot \E\left[ \frac{(\sum_{t=1}^T \zR_t - B)^{+}}{a} + 1\right]}_{(b)}.
\end{aligned}
\end{equation}

In the following, we analyze the alternate system and show that $(a) \leq O(\sqrt{T\ln T}) + 2\mathcal{W}_T$ (\Cref{appx:subsec:term-a}) and $(b) \leq O(\sqrt{T})$ (\Cref{appx:subsec:term-b}), with which we prove the result.
We first provide some auxiliary lemmas in \Cref{appx:subsec:prep-unknown-F} for preparation.

\subsection{Auxiliary Lemmas}\label{appx:subsec:prep-unknown-F}

\begin{lemma}\label{lemma:bdd-mu}
If the initial dual variable $\mu_1 \leq \frac{b}{a} + b$ and the step size $\eta \leq 1$, then the dual variables $\mu_t \leq \frac{b}{a} + b$ are uniformly bounded from above for all $t \geq 1$.
\end{lemma}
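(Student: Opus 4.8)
The plan is to prove Lemma~\ref{lemma:bdd-mu} by induction on $t$, establishing that once $\mu_t$ lies below the threshold $\frac{b}{a}+b$, the gradient-descent update $\mu_{t+1}=(\mu_t-\eta g_t)^+$ cannot push it above that threshold. The base case holds by the assumption $\mu_1\leq \frac{b}{a}+b$. For the inductive step, I would split into two regimes depending on how large $\mu_t$ already is, so that I can separately control the two cases where an update could potentially increase $\mu_t$.

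First I would recall the exact form of the subgradient. In \Cref{alg:1} we have $g_t = \hrho_t - x_t^\pi \mathbf{1}[x_t^\pi\geq m_t]$, where in this uninformative setting $\hrho_t=\rho=\frac{B}{T}$. Since all bids and competitor bids lie in $[a,b]$, the payment term $x_t^\pi\mathbf{1}[x_t^\pi\geq m_t]$ is bounded: it is either $0$ (a loss or zero bid) or lies in $[a,b]$. The key observation is that $g_t$ can be negative (driving $\mu_{t+1}$ up) only when the payment term is strictly positive and exceeds $\hrho_t$, in which case $-g_t = x_t^\pi\mathbf{1}[x_t^\pi\geq m_t]-\hrho_t \leq b$. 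Thus a single update increases $\mu_t$ by at most $\eta\cdot b \leq b$ (using $\eta\leq 1$).

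The heart of the argument is to show that \emph{whenever an increasing update occurs}, $\mu_t$ must already have been small enough that even adding $b$ keeps it under the threshold. The natural dividing line is $\frac{b}{a}$: I claim that if $\mu_t > \frac{b}{a}$, then the target bid $\tilde{x}_t=\argmax_{x\in[a,b]}(v_t-(1+\mu_t)x)G_t(x)$ satisfies $\tilde{x}_t$ with $G_t(\tilde x_t)=0$, or more directly, the optimal bid yields zero payment, so that $g_t=\hrho_t\geq 0$ and the update can only decrease $\mu_t$. To see this, note that for any bid $x\geq a$ the coefficient $(v_t-(1+\mu_t)x)\leq b-(1+\mu_t)a = b-a-\mu_t a < b-a-\frac{b}{a}\cdot a = -a <0$ once $\mu_t>\frac{b}{a}$, so the objective $(v_t-(1+\mu_t)x)G_t(x)$ is maximized by making the winning probability zero — hence the algorithm effectively does not pay, giving $x_t^\pi\mathbf{1}[x_t^\pi\geq m_t]$ contributing no positive payment and $g_t=\hrho_t - 0\geq 0$. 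Therefore in the regime $\mu_t>\frac{b}{a}$ the dual variable never grows, while in the complementary regime $\mu_t\leq\frac{b}{a}$ a single update can add at most $b$, yielding $\mu_{t+1}\leq\frac{b}{a}+b$. Combining the two regimes closes the induction.

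The main obstacle I anticipate is verifying rigorously that $\mu_t>\frac{b}{a}$ forces zero payment. The subtlety is that $\tilde x_t$ is computed with the \emph{estimated} CDF $G_t(\cdot)$ rather than the true $G(\cdot)$, and one must argue the sign analysis of the objective coefficient is insensitive to this, since $G_t(x)\geq 0$ always and the coefficient $(v_t-(1+\mu_t)x)$ being negative for all $x\in[a,b]$ makes the maximizer attain objective value $\leq 0$, forcing the algorithm to prefer a bid with $G_t=0$ (or the boundary $x=a$) whose realized payment is $0$ or whose indicator $\mathbf 1[x_t^\pi\geq m_t]$ contributes nonpositively to $-g_t$. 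Care is also needed at the boundary case $v_t=b$ and $\mu_t$ exactly near $\frac{b}{a}$; I would handle these by using strict inequalities and the fact that $a>0$, ensuring the coefficient is bounded away from zero whenever $\mu_t$ exceeds the threshold. Once this payment-vanishing claim is secured, the rest is the elementary two-regime bound described above.
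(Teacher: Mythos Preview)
Your proposal is correct and follows essentially the same two-case inductive argument as the paper: split at the threshold $\tfrac{b}{a}$, show that when $\mu_t\geq\tfrac{b}{a}$ the Lagrangian-adjusted objective is nonpositive for every $x\in[a,b]$ so the payment term vanishes and $g_t\geq 0$, and when $\mu_t\leq\tfrac{b}{a}$ bound the increment by $\eta b\leq b$. Your discussion of why the zero-payment claim is insensitive to using the empirical $G_t$ is, if anything, more careful than the paper's proof, which simply asserts $\xR_t=0$ without addressing that the bid space is $[a,b]$.
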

\begin{proof}{Proof of Lemma \ref{lemma:bdd-mu}:}
It suffices to show that if $\muR_t \leq \frac{b}{a} + b$, then $\muR_{t+1} \leq \frac{b}{a} + b$ as well. To see this,
first, if $\muR_t \in  [\frac{b}{a}, \frac{b}{a} + b]$, then $\xR_t \triangleq \text{argmax}_{x \in [a,b]} \big(v_t - (1 + \muR_t) x\big) {G}_t(x) = 0$.
As a result,
$\gR_t = \rho > 0$ and $\muR_{t+1} = (\muR_t - \eta \gR_t)^{+} \leq \muR_t \leq \frac{b}{a} + b$.
Next, suppose that $\muR_t \leq \frac{b}{a}$ and $\eta \leq 1$.
Since $\gR_t = \rho - \xR_t\mathbf{1}[\xR_t \geq m_t] \geq -b$, we have
$\muR_{t+1} = (\muR_t - \eta \gR_t)^{+} \leq \muR_t + b \leq \frac{b}{a} + b$.
\end{proof}

We introduce the well-known Dvoretzky-Kiefer-Wolfowith inequality in Lemma \ref{lemma:DKW_ineq} to bound the error of estimating the distribution $G(\cdot)$ using its empirical distribution.

\begin{lemma}[Dvoretzky-Kiefer-Wolfowith Inequality]\label{lemma:DKW_ineq}
Let $G(x)$ be a one-dimensional cumulative distribution function,
and $G_n(x)$ be an empirical cumulative distribution function from $n$ $i.i.d.$ samples of $G(x)$. Then,
for any $n > 0$ and $\epsilon > 0$,
\begin{equation*}
\p\left[ \sup_{x \in \mathbb{R}} \big|G_n(x) - G(x)\big| \geq \epsilon \right] \leq 2 e^{-2n\epsilon^2}.
\end{equation*}
\end{lemma}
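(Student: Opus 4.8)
The plan is to treat this as the classical Dvoretzky--Kiefer--Wolfowitz concentration bound, whose sharp leading constant $2$ is due to Massart. The strategy proceeds in three stages: (i) reduce an arbitrary (possibly discrete) $G$ to the uniform distribution on $[0,1]$; (ii) split the two-sided deviation into two one-sided deviations, which is exactly where the factor $2$ enters; and (iii) control a single one-sided deviation by rewriting it as a boundary-crossing event for uniform order statistics.

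For stage (i), I would use the generalized quantile transform $G^{-1}(u)=\inf\{x: G(x)\ge u\}$: if $U_1,\dots,U_n$ are i.i.d.\ uniform on $[0,1]$, then $m_i := G^{-1}(U_i)$ have CDF $G$, and the Galois identity $G^{-1}(u)\le x \iff u\le G(x)$ gives $G_n(x)=\hat U_n(G(x))$, where $\hat U_n(u)=\tfrac1n\sum_i \mathbf 1[U_i\le u]$. Hence $\sup_{x}|G_n(x)-G(x)| = \sup_{u\in\mathrm{range}(G)}|\hat U_n(u)-u| \le \sup_{u\in[0,1]}|\hat U_n(u)-u|$, with equality when $G$ is continuous. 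Since a discrete $G$ merely restricts the supremum to a sub-range, bounding the uniform empirical deviation suffices and automatically covers the discrete case explicitly allowed in the model.

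For stages (ii)--(iii), write $D_n^{+}=\sup_{u}(\hat U_n(u)-u)$ and $D_n^{-}=\sup_{u}(u-\hat U_n(u))$, so the event of interest is contained in $\{D_n^{+}\ge\epsilon\}\cup\{D_n^{-}\ge\epsilon\}$; by symmetry $D_n^{+}$ and $D_n^{-}$ are identically distributed, so a union bound reduces everything to proving $\p[D_n^{+}\ge\epsilon]\le e^{-2n\epsilon^2}$. Using the order statistics $U_{(1)}\le\cdots\le U_{(n)}$, the empirical CDF is piecewise constant, yielding the identity $D_n^{+}=\max_{1\le k\le n}\big(\tfrac{k}{n}-U_{(k)}\big)$; consequently $\{D_n^{+}\ge\epsilon\}=\{\,\exists\,k:\ U_{(k)}\le \tfrac{k}{n}-\epsilon\,\}$, a boundary-crossing event for the uniform order statistics below the line $u=\tfrac{k}{n}-\epsilon$. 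Its probability admits classical exact combinatorial expressions (via the cycle/ballot lemma), and the sharp tail $e^{-2n\epsilon^2}$ follows from Massart's analysis of these expressions.

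The main obstacle is precisely stage (iii): extracting the exact exponent $2n\epsilon^2$ with leading constant $1$ from the order-statistics boundary probability is delicate and is the substance of Massart's theorem. If only the sub-Gaussian rate (rather than the exact constant) were needed, I would fall back on either of two more elementary routes: a VC/symmetrization argument, since the class $\{x\mapsto\mathbf 1[\,\cdot\le x]\}$ has VC dimension $1$, giving a bound of the form $C\,(n+1)\,e^{-cn\epsilon^2}$; or McDiarmid's bounded-differences inequality, since changing one sample moves $\sup_x|G_n(x)-G(x)|$ by at most $1/n$, which yields concentration at rate $e^{-2nt^2}$ about the mean $\E[\sup_x|G_n(x)-G(x)|]=O(1/\sqrt n)$. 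Either bound is sufficient for the downstream regret analysis, so for the paper's purposes the clean form $2e^{-2n\epsilon^2}$ is invoked as the known tight version.
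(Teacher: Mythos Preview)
Your proof sketch is sound and traces the standard route to the sharp DKW bound (quantile reduction to uniforms, one-sided symmetry, and Massart's analysis of the boundary-crossing probability for the uniform order statistics). However, the paper does not prove this lemma at all: it is simply stated as the ``well-known Dvoretzky--Kiefer--Wolfowitz inequality'' and invoked as a black box to control $\mathbb{E}\sum_t err_G(t)$. So there is nothing to compare against; your proposal supplies a proof where the paper supplies only a citation.
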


Let $err_G(t) \triangleq \sup_{x \in [a,b]} \big|{G}_{t}(x) - G(x)\big|$ denote the estimation error regarding the distribution $G$ in period $t$.
From Lemma \ref{lemma:DKW_ineq} and the union bound we have, with probability at least $1 - \frac{1}{T}$,
$err_G(t+1) \leq \sqrt{\frac{\ln2 + 2 \ln T}{2t}}$ for all periods $t \in [T]$.
Therefore, we have
\begin{equation}\label{eq:est-err}
\begin{aligned}
    \E \sum_{t=1}^T \left[err_G(t) \right] & \leq 2 + \sum_{t=1}^{\infty} \sqrt{\frac{\ln2 + 2 \ln T}{2t}} \\
    & = O\big(\sqrt{T\ln T}\big).
\end{aligned}
\end{equation}

Finally, for any dual variable $\mu \in \mathbb{R}_{+}$ and $v \sim F$ where $F(\cdot)$ is private-value distribution, we let
\begin{equation}\label{eq:def-L}
L(\mu, F) \triangleq \E_{v}\left[\big(v - (1 + \mu) x^*(v,\mu)\big) G\big(x^*(v,\mu)\big) \right]
\end{equation}
denote the Lagrangian-adjusted expected reward under the private-value distribution $F(\cdot)$.
Note that the period-$t$ problem $D_t(\mu)$ in the Lagrangian relaxation (as defined in (\ref{eq:offlineLR2})) can be expressed as
$D_t(\mu) = \mu \rho + L(\mu, F_t)$.
Lemma \ref{lemma:W-dist} shows that the function $L(\mu, F)$ is Lipschitz-continuous with respect to distribution $F$ in terms of the Wasserstein distance.
\begin{lemma}\label{lemma:W-dist}
For any two private-value distributions $F_1$ and $F_2$, we have
\begin{equation*}
\sup_{\mu \geq 0} |L(\mu, F_1) - L(\mu, F_2)| \leq \mathcal{W}(F_1, F_2).
\end{equation*}
\end{lemma}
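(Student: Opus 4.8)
The plan is to recognize $L(\mu, F)$ as the $F$-expectation of a \emph{value function} of the private value that is Lipschitz with respect to the very ground cost appearing in the definition of $\mathcal{W}$, and then to lift this pointwise Lipschitz estimate to distributions through the coupling (transport-plan) characterization of the Wasserstein distance in \eqref{eq:Wasserstein_F1_F2}.

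First I would record the structural observation that turns \eqref{eq:def-L} into an expectation of a max. Writing $\phi_\mu(v,x) \triangleq (v - (1+\mu)x)G(x)$, the fact that $x^*(v,\mu)$ is the inner maximizer means $\big(v - (1+\mu)x^*(v,\mu)\big)G\big(x^*(v,\mu)\big) = \max_{x\in[a,b]} \phi_\mu(v,x)$, so $L(\mu, F) = \E_{v\sim F}\big[h_\mu(v)\big]$ where $h_\mu(v) \triangleq \max_{x\in[a,b]} \phi_\mu(v,x)$. Since each $\phi_\mu(\cdot,x)$ is affine in $v$ with slope $G(x)\in[0,1]$, the value function $h_\mu$ is a maximum of affine functions, hence convex and nondecreasing, which is the qualitative source of its Lipschitz behavior.

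Second I would establish the pointwise bound. For two values $v_1,v_2$, the elementary inequality $|\max_x f(x) - \max_x g(x)| \le \sup_x |f(x)-g(x)|$, applied to $f=\phi_\mu(v_1,\cdot)$ and $g=\phi_\mu(v_2,\cdot)$, gives $|h_\mu(v_1) - h_\mu(v_2)| \le \sup_{x\in[a,b]}|\phi_\mu(v_1,x) - \phi_\mu(v_2,x)|$. The right-hand side is exactly the ground cost $d(v_1-v_2)$ from \eqref{eq:Wasserstein_F1_F2} evaluated on the point masses at $v_1$ and $v_2$; crucially, the term $(1+\mu)xG(x)$ is common to both arguments and cancels in the difference, so this cost reduces to $|v_1-v_2|\sup_{x}G(x)$ and carries \emph{no} $\mu$-dependence. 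Then I would lift to distributions: for any coupling $F_{1,2}\in\mathcal{J}$ with marginals $F_1,F_2$ we have $L(\mu,F_1)-L(\mu,F_2) = \E_{(v_1,v_2)\sim F_{1,2}}[h_\mu(v_1)-h_\mu(v_2)]$, so by $|\E[\cdot]|\le\E|\cdot|$ and the pointwise bound, $|L(\mu,F_1)-L(\mu,F_2)| \le \E_{F_{1,2}}[d(v_1-v_2)]$. Taking the infimum over all couplings yields $|L(\mu,F_1)-L(\mu,F_2)| \le \mathcal{W}(F_1,F_2)$ for each fixed $\mu$, and since the bound is uniform in $\mu$ I would finally take $\sup_{\mu\ge0}$ on the left to conclude.

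The main obstacle is bookkeeping rather than a genuine difficulty: one must state the value-function identity $L(\mu,F)=\E_{v\sim F}[h_\mu(v)]$ correctly and then match the $\sup_{x\in[a,b]}$ in the max-difference inequality \emph{exactly} to the cost $d$ in \eqref{eq:Wasserstein_F1_F2}, verifying that the $\mu$-dependent terms cancel so that the resulting bound $\mathcal{W}(F_1,F_2)$ is genuinely independent of $\mu$ and the final $\sup_{\mu\ge0}$ is harmless. Once that correspondence is pinned down, the max-difference inequality and the coupling argument are entirely standard.
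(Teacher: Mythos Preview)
Your argument is correct and takes a genuinely simpler route than the paper. Both proofs pass through a coupling of $F_1$ and $F_2$, but they differ in how the pointwise comparison is obtained. You write $L(\mu,F)=\E_{v\sim F}[h_\mu(v)]$ with $h_\mu(v)=\max_{x\in[a,b]}(v-(1+\mu)x)G(x)$ and apply the elementary inequality $|\max_x f-\max_x g|\le\sup_x|f-g|$, which gives $|h_\mu(v_1)-h_\mu(v_2)|\le d(v_1-v_2)$ directly; the $(1+\mu)xG(x)$ term cancels in the difference, so the bound is automatically uniform in $\mu$ and the final $\sup_{\mu\ge0}$ is free. The paper instead transports the \emph{maximizer}: it forms $\hat x(\theta_1)=\int x^*(\theta_2)\,\mathrm{d}F^*_{1,2}(\theta_2\mid\theta_1)$ as a conditional average of optimal bids under the optimal coupling, uses suboptimality of $\hat x(\theta_1)$ relative to $x^*(\theta_1)$, and then appeals to ``concavity of $L$ over $x$'' (i.e., Jensen) to replace $\hat x(\theta_1)$ by the mixture over $x^*(\theta_2)$. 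That concavity step is the delicate part of the paper's argument, since $x\mapsto(v-(1+\mu)x)G(x)$ need not be concave for a general CDF $G$; your max-difference route sidesteps this entirely, requires no shape assumption on $G$, and is both shorter and more robust.
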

\begin{proof}{Proof of Lemma \ref{lemma:W-dist}:}
%The proof is analogous to proof of Lemma 3 in \cite{jiang2020online}, and we omit the detail here. Note that in our setting, the expected budget consumption in a period $t$ is $x_t G(x_t)$, which is independent of the private value $v_t$.
Due to symmetry, it's sufficient to show that for any feasible $\mu$,
\begin{equation*}
    L(\mu, F_2) - L(\mu, F_1) \leq \mathcal{W}(F_1, F_2)
\end{equation*}

Denote $F^*_{1,2}$ as  the optimal coupling of the distribution $F_1$ and $F_2$, i.e., the optimal solution to (\ref{eq:Wasserstein_F1_F2}). We denote the private-value distribution by parameter $\theta \in \Theta$, therefore we can write $F_1$ and $F_2$ as $F(\theta_1)$ and $F(\theta_2)$. We denote $x^*(\theta) \triangleq \argmax_{x \in [a,b]} \E_{v \sim F(\theta)}\left[\left(v - (1 + \mu) x\right) G(x)\right] $. Then for each $\theta_1 \in \Theta$, we define
\begin{equation*}
    \hx(\theta_1) \triangleq \int_{\theta_2 \in \Theta} x^*(\theta_2) \frac{\dd F^*_{1,2}(\theta_1, \theta_2)}{\dd F(\theta_1)}
\end{equation*}
where $\frac{\dd F^*_{1,2}(\theta_1, \theta_2)}{\dd F_1(\theta_1)}$ is the Radon–Nikodym derivative of $F^*_{1,2}$ with respect to $F_1$ and it can be viewed as the conditional distribution of $\theta_2$ given $\theta_1$. From the definition of $F^*_{1,2}$, we know that
\begin{equation*}
    \int_{\theta_2 \in \Theta} \frac{\dd F^*_{1,2}(\theta_1, \theta_2)}{\dd F(\theta_1)} =1
\end{equation*}

Therefore, $\hx(\theta_1)$ can be interpreted as  a convex combination of $\{x^*(\theta_2)\}_{\theta_2 \in \Theta}$. From the concavity of $L$ over $x$, we have
\begin{equation*}
    \begin{aligned}
        L(\mu, F_1) & = \int_{\theta_1 \in \Theta} \E_{v \sim F(\theta_1)}\Big[G(x^*(\theta_1)) \\
        & \quad \cdot \big(v - (1 + \mu) x^*(\theta_1)\big)\Big] \dd F(\theta_1)\\
        & \geq  \int_{\theta_1 \in \Theta} \E_{v \sim F(\theta_1)}\Big[G(\hx(\theta_1))\\
        & \quad \cdot \big(v - (1 + \mu) \hx(\theta_1)\big)\Big] \dd F(\theta_1)\\
        & \geq  \int_{\theta_1 \in \Theta} \int_{\theta_2 \in \Theta} \E_{v \sim F(\theta_1)}\Big[G(x^*(\theta_2))\\
        & \quad \cdot \big(v - (1 + \mu) x^*(\theta_2)\big) \Big] \dd F^*_{1,2}(\theta_1, \theta_2)
    \end{aligned}
\end{equation*}
where the first inequality holds from the definition of $x^*(\theta)$ and the second inequality holds for the definition of $\hx(\theta_1)$ and the concavity of $L$.
For any $\theta_1, \theta_2 \in \Theta$, from the definition of $d(\theta_1,\theta_2)$, it holds that
\begin{equation*}
    \begin{aligned}
        & \E_{v \sim F(\theta_1)}\left[\big(v - (1 + \mu) x^*(\theta_2)\big) G(x^*(\theta_2))\right]\\
        \geq \ & \E_{v \sim F(\theta_2)}\left[\big(v - (1 + \mu) x^*(\theta_2)\big) G(x^*(\theta_2))\right]\\
        & - d(\theta_1,\theta_2)
    \end{aligned}
\end{equation*}

Therefore
\begin{equation*}
    \begin{aligned}
        & L(\mu, F_1)\\
        \geq \ & \int_{\theta_1 \in \Theta} \int_{\theta_2 \in \Theta} \Big\{ \E_{v \sim F(\theta_2)}\Big[G(x^*(\theta_2)) \\
        & \cdot \big(v - (1 + \mu) x^*(\theta_2)\big) \Big] -d(\theta_1,\theta_2) \Big\} \dd F^*_{1,2}(\theta_1, \theta_2)\\
        = \ & \int_{\theta_2 \in \Theta} \E_{v \sim F(\theta_2)}\Big[G(x^*(\theta_2)) \\
        & \cdot \big(v - (1 + \mu) x^*(\theta_2)\big) \Big] \dd F(\theta_2) - \mathcal{W}(F_1,F_2)\\
        = \ & L(\mu, F_2) - \mathcal{W}(F_1,F_2)
    \end{aligned}
\end{equation*}
where the first equality holds for $\int_{\theta_1 \in \Theta} \dd F^*_{1,2}(\theta_1, \theta_2) = \dd F(\theta_2)$.
\end{proof}

\subsection{Upper Bound on Term $(a)$}\label{appx:subsec:term-a}

Let $f^{*}(v, \mu) \triangleq \max_{x \in [a,b]} (v - (1+\mu)x) G(x)$ denote the optimization problem solved in each period of $\VLR(\mu)$, and recall that $x^*(v, \mu) = \text{argmax}_{x \in [a,b]} (v_t - (1 - \mu) x) G(x)$ denotes the optimal solution.
We can bound the single-period expected reward as follows:
\begin{equation*}\label{eq:one-period-bdd}
\begin{aligned}
&(v_t - \xR_t) G(\xR_t) \\
= & \Big(v_t - (1 + \muR_t)\xR_t \Big) {G}_t(\xR_t)  + \muR_t \xR_t G(\xR_t)\\
% & + \Big(v_t - (1+\muR_t)\xR_t\Big) \Big( G(\xR_t)  - {G}_t(\xR_t)\Big)\\
% \geq & \Big(v_t - (1 + \muR_t) x^*(v_t, \muR_t)\Big) {G}_t\Big(x^*(v_t, \muR_t)\Big)\\
% & + \muR_t \xR_t G(\xR_t) \\
% & + \Big(v_t - (1+\muR_t)\xR_t\Big)\Big( G(\xR_t)  - {G}_t(\xR_t)\Big)\\
\geq & \Big(v_t - (1 + \muR_t) x^*(v_t, \muR_t)\Big) {G}\Big(x^*(v_t, \muR_t)\Big)\\
& + \Big(v_t - (1+\muR_t)\xR_t\Big) \Big( G(\xR_t)  - {G}_t(\xR_t)\Big) \\
& + \muR_t \xR_t G(\xR_t) + \Big(v_t - (1 + \muR_t)x^*(v_t, \muR_t)\Big) \\
& \cdot \Big({G}_t(x^*(v_t, \muR_t))  - G(x^*(v_t, \muR_t))\Big) \\
\geq & f^*(v_t, \muR_t) + \muR_t \xR_t G(\xR_t) - 2b \cdot err_G(t)
\end{aligned}
\end{equation*}
where the first inequality follows from the definiton $\xR_t = \text{argmax}_{x \in [a,b]} \big(v_t - (1 + \muR_t) x\big) {G}_t(x)$
and the second inequality follows from the definition of $err_G(t) = \sup_{x \in [a,b]} \big|{G}_{t}(x) - G(x)\big|$.

As a result,
\begin{equation}\label{eq:tmp7}
\begin{aligned}
(a) \leq \ & \VOPT - \E\left[\sum_{t=1}^T f^*(v_t, \muR_t) + \muR_t \rho \right]\\
& + \E\left[\sum_{t=1}^T \muR_t (\rho - \xR_t G(\xR_t))\right] \\
& + 2b \cdot \E\left[\sum_{t=1}^{T} err_G(t)\right] \\
\leq \ & \underbrace{\VOPT - \E\left[\sum_{t=1}^T f^*(v_t, \muR_t) + \muR_t \rho \right]}_{(c)} \\
& + \underbrace{\E\left[\sum_{t=1}^T \muR (\rho - \xR_t \mathbf{1}[\xR_t \geq m_t])\right]}_{(d)}\\
& + O(\sqrt{T \ln T})
\end{aligned}
\end{equation}
where the second inequality follows from (\ref{eq:est-err}). We now bound the terms $(c)$ and $(d)$ from above.

\paragraph{Upper Bound on Term $(c)$.}
Since $\muR_t$ and $v_t$ are independent, we have
\begin{equation}\label{eq:tmp2}
\begin{aligned}
    \E[f^*(v_t, \muR_t)] & = \E \big[ L(\muR_t, F_t) \big]\\
    & \geq \E \big[ L(\muR_t, \bar{F}_T) \big] - \mathcal{W}(F_t, \bar{F}_T)
\end{aligned}
\end{equation}
where the inequality follows from Lemma \ref{lemma:W-dist}.
Let $\bar{\mu} = \frac{1}{T}\sum_{t=1}^{T} \muR_t$ be the mean value of the dual variables $\muR_t$. We have
\begin{equation}\label{eq:tmp3}
\begin{aligned}
& \E\left[\sum_{t=1}^T f^*(v_t, \muR_t) + \muR_t \rho \right]\\
\geq \ & \E\left[\sum_{t=1}^T \left(L(\muR_t, \bar{F}_T) + \muR_t \rho \right) \right] - \mathcal{W}_T(\mathcal{F}) \\
\geq \ & T \cdot \E\big[\left(L(\bar{\mu}, \bar{F}_T) + \bar{\mu} \rho \right) \big] - \mathcal{W}_T
\end{aligned}
\end{equation}
where the first inequality follows from (\ref{eq:tmp2}), and the second inequality follows from the fact that
$L(\mu, F)$ is convex in $\mu$ and the Jensen's inequality.

On the other hand, since $\VOPT \leq \VLR(\mu)$ for any $\mu \geq 0$ (Lemma \ref{lemma:weak-duality})
and $\VLR(\mu) = \sum_{t=1}^T  D_t(\mu) = \sum_{t=1}^T \Big\{L(\mu, F_t) + \mu \rho \Big\}$,
we have
\begin{equation}\label{eq:tmp4}
\begin{aligned}
    \VOPT & \leq \E[\VLR(\bar{\mu})] = \sum_{t=1}^T \E\Big[L(\bar{\mu}, F_t) + \bar{\mu} \rho \Big] \\
    & \leq \sum_{t=1}^T \left\{ \E\Big[L(\bar{\mu}, \bar{F}_T)  + \bar{\mu} \rho \Big] + \mathcal{W}(F_t, \bar{F}_T) \right\} \\
    & = T\E\left[\left(L(\bar{\mu}, \bar{F}_T) + \bar{\mu} \rho \right) \right] + \mathcal{W}_T
\end{aligned}
\end{equation}
where the second inequality follows from Lemma \ref{lemma:W-dist}.
From (\ref{eq:tmp3}) and (\ref{eq:tmp4}), we have
\begin{equation}\label{eq:tmp5}
(c) \leq 2 \mathcal{W}_T.
\end{equation}

\paragraph{Upper Bound on Term $(d)$.}
Since $\muR_{t+1} = (\muR_t - \eta \gR_t)^{+}$, $\gR_t = \rho - \xR_t \mathbf{1}[\xR_t \geq m_t]$,
and $|\gR_t| \leq \max\{\rho, b - \rho\} \leq b$, we have
\begin{equation*}
(\muR_{t+1})^2 \leq (\muR_{t})^2 + b^2\eta^2  - 2 \eta \muR_t (\rho - \xR_t \mathbf{1}[\xR_t \geq m_t])
\end{equation*}

By telescoping over $t \in [T]$, we have
\begin{equation*}
\begin{aligned}
& \sum_{t=1}^T \muR_t (\rho - \xR_t \mathbf{1}[\xR_t \geq m_t])\\
\leq & \frac{b^2}{2} \cdot \eta T + \frac{(\mu_1)^2 - (\muR_{T+1})^2}{2\eta}.
\end{aligned}
\end{equation*}

Therefore, since $\muR_t$ is uniformly bounded by $\frac{b}{a} +b$, taking $\eta = \frac{1}{\sqrt{T}}$, we have
\begin{equation}\label{eq:tmp6}
(d) \leq  O(\sqrt{T}).
\end{equation}

From (\ref{eq:tmp7}), (\ref{eq:tmp5}), and (\ref{eq:tmp6}), we have
\begin{equation*}
(a) \leq O(\sqrt{T \ln T}) + 2 \mathcal{W}_T.
\end{equation*}

\subsection{Upper Bound on Term $(b)$}\label{appx:subsec:term-b}
Since $\muR_{t+1} = (\muR_t - \eta g_t)^{+} \geq \muR_t - \eta g_t = \muR_t - \eta (\rho - \zR_t)$
and $\rho = \frac{B}{T}$, by telescoping over $t$, we have
\begin{equation*}
\sum_{t=1}^T \zR_t - B = \sum_{t=1}^T \left(\zR_t - \rho \right)
\leq \frac{\muR_{T+1} - \mu_1}{\eta} \leq \frac{b/a+b}{\eta}
\end{equation*}
where the last equality follows from Lemma \ref{lemma:bdd-mu}.
As a result, if we take the step size $\eta = \frac{1}{\sqrt{T}}$,
\begin{equation*}
(b) \leq O\left(\frac{1}{\eta}\right) = O(\sqrt{T}).
\end{equation*}

\section{Proof of Proposition \ref{prop:lbd-W}}
We separate the proof into two parts:(i)the regret is lower bounded by $\Omega(\mathcal{W}_T)$, and (ii)the regret is lower bounded by $\Omega(\sqrt{T})$. By combining the two parts, we prove our results. It is folklore in the literature that no online policy can break the lower bound $\Omega(\sqrt{T})$. Therefore, it only remains to prove the lower bound of $\Omega(\mathcal{W}_T)$.

To simplify the proof, here we assume $a=0, b=1$, and denote the reward and consumption at each time period $t$ as $f_t(x_t) = (v_t - x_t) \cdot \mathbf{1}[x_t \geq m_t]$ and $g_t(x_t) = x_t \mathbf{1}[x_t \geq m_t] = z_t$.

We consider the scenario when $m_t = \frac{1}{2}$ for each $t$ and define $\mathbf{1}[x_t \geq \frac{1}{2}]$ as $I(x_t)$. That is, when $x_t < \frac{1}{2}$, we have $\mathbf{1}[x_t \geq m_t] = 0$ and $f_t(x_t) = g_t(x_t) = 0$; when $x_t \geq \frac{1}{2}$, we have $\mathbf{1}[x_t \geq m_t] = 1$ and $f_t(x_t) = v_t - x_t, g_t(x_t) = x_t$. For $x_t \geq \frac{1}{2}$, the optimal policy is always to set $x_t = \frac{1}{2}$ to maximize the reward and minimize the consumption.

Set the budget constraint $B = T/4$. Now we consider the following two scenario. The first one, given in (\ref{s1}), is that $v_t = \frac{3}{4}$ for the first half of time horizon $t=1, \cdots, \frac{T}{2}$ and $v_t = \frac{3}{4} + \mathcal{W}_T/T$ for the second half of time horizon $t=\frac{T}{2}+1, \cdots, T$. The second scenario, given in (\ref{s2}), is that $v_t = \frac{3}{4}$ for $t=1, \cdots, \frac{T}{2}$ and $v_t = \frac{3}{4} - \mathcal{W}_T/T$ for $t=\frac{T}{2}+1, \cdots, T$.
\begin{equation} \label{s1}
    \begin{aligned}
       \max \quad & \Big(\frac{3}{4} - x_1\Big)I(x_1) + \cdots + \Big(\frac{3}{4} - x_{\frac{T}{2}}\Big)I(x_{\frac{T}{2}})\\
       & + \Big(\frac{3}{4}+\frac{\mathcal{W}_T}{T} - x_{\frac{T}{2}+1}\Big)I(x_{\frac{T}{2}+1}) + \cdots\\
       & + \Big(\frac{3}{4}+\frac{\mathcal{W}_T}{T} - x_T\Big)I(x_T) \\
       \text{s.t.} \quad & \sum_{t=1}^T z_t  \leq \frac{T}{4} , \quad x_t \in [0,1], \quad t=1,\cdots,T.
    \end{aligned}
\end{equation}

\begin{equation} \label{s2}
    \begin{aligned}
       \max \quad & \Big(\frac{3}{4} - x_1\Big)I(x_1) + \cdots + \Big(\frac{3}{4} - x_{\frac{T}{2}}\Big)I(x_{\frac{T}{2}})\\
       & + \Big(\frac{3}{4}-\frac{\mathcal{W}_T}{T} - x_{\frac{T}{2}+1}\Big)I(x_{\frac{T}{2}+1}) + \cdots\\
       & + \Big(\frac{3}{4}-\frac{\mathcal{W}_T}{T} - x_T\Big)I(x_T) \\
       \text{s.t.} \quad & \sum_{t=1}^T z_t  \leq \frac{T}{4} , \quad x_t \in [0,1], \quad t=1,\cdots,T.
    \end{aligned}
\end{equation}

For any online policy, we denote $x_t^1(\pi)$ as the decision of policy $\pi$ at time period $t$ under the first scenario given in (\ref{s1}) and $x_t^2(\pi)$ as the decision of policy $\pi$ at time period $t$ under the second scenario given in (\ref{s2}). Then we define $T_1(\pi)$ and $T_2(\pi)$ as the number of $x_t$ which is no less than $\frac{1}{2}$ of policy $\pi$ under the two scenario during the first $T/2$ time periods.
\begin{equation*}
    \begin{aligned}
        & T_1(\pi) = \E \left[\sum_{t=1}^{\frac{T}{2}} \mathbf{1}[x_t^1 \geq \frac{1}{2}]\right]\\
        & T_2(\pi) = \E \left[\sum_{t=1}^{\frac{T}{2}} \mathbf{1}[x_t^2 \geq \frac{1}{2}]\right] 
    \end{aligned}
\end{equation*}

Considering the budget constraint, we know that $T_1(\pi) \leq T/2$ and 
$T_2(\pi) \leq T/2$. Then we can calculate the expected reward collected by policy $\pi$ on both scenario:
\begin{equation*}
    \begin{aligned}
        \text{ALG}_T^1(\pi) & \leq  T_1(\pi)\Big(\frac{3}{4} - \frac{1}{2}\Big)\\
        & \quad + \Big(\frac{3}{4} - \frac{1}{2}+\frac{\mathcal{W}_T}{T}\Big)\Big(\frac{T}{2} - T_1(\pi)\Big)\\
        & = \frac{T}{8} + \frac{\mathcal{W}_T}{2} - \frac{\mathcal{W}_T}{T} \cdot T_1(\pi) \\
        \text{ALG}_T^2(\pi) & \leq T_2(\pi)\Big(\frac{3}{4} - \frac{1}{2}\Big) \\
        & \quad + \Big(\frac{3}{4} - \frac{1}{2}-\frac{\mathcal{W}_T}{T}\Big)\Big(\frac{T}{2} - T_2(\pi)\Big) \\
        & = \frac{T}{8} - \frac{\mathcal{W}_T}{2} + \frac{\mathcal{W}_T}{T} \cdot T_2(\pi). 
    \end{aligned}
\end{equation*}

The offline optimal policy $\pi^\star$ who is aware of $v_t$ for each $t$ can achieve the objective value:
\begin{equation*}
    \text{ALG}_T^1(\pi^\star) = \frac{T}{8} + \frac{\mathcal{W}_T}{2}, \quad \text{ALG}_T^2(\pi^\star) = \frac{T}{8}.  
\end{equation*}

Thus the regret of policy $\pi$ on scenario (\ref{s1}) and (\ref{s2}) are no less than $\frac{\mathcal{W}_T}{T} \cdot T_1(\pi)$ and $\frac{\mathcal{W}_T}{2} - \frac{\mathcal{W}_T}{T} \cdot T_2(\pi)$. Note that the implementation of policy $\pi$ at each time period should be independent of future realizations, we must have $T_1(\pi) = T_2(\pi)$. As a result, for any online policy $\pi$, we have
\begin{equation}
    \begin{aligned}
        & \text{regret}(\pi)\\
        \geq & \max \left\{\frac{\mathcal{W}_T}{T} \cdot T_1(\pi),  \frac{\mathcal{W}_T}{2} - \frac{\mathcal{W}_T}{T} \cdot T_1(\pi) \right\}\\
        \geq & \frac{\mathcal{W}_T}{4} = \Omega(\mathcal{W}_T).
    \end{aligned}
\end{equation}

\section{Proof of Lemma \ref{lem:decomposition}}
We first prove that $\mu^*$ is an optimal solution of $D_t(\mu)$ for all $t \in [T]$. To see this, note that for each $t$, $D_t(\mu)$ is a convex function of $\mu$ and $\nabla D_t(\mu)=\rho_t+\nabla L(\mu,F_t)=\rho_t-\E_{v \sim F_t}\left[ x^*(v,\mu) G\big(x^*(v,\mu)\big) \right]$.

With the definition of $\rho_t$ in \eqref{def:rho1}, it follows immediately that $\nabla D_t(\mu^*)=0$, which implies that $\mu^*$ is a minimizer of the function $D_t(\mu)$ for each $t$.
Since $\min_{\mu\geq0}V^{\text{LR}}(\mu) = \VLR(\mu^*)$ and $\sum_{t=1}^{T}D_t(\mu^*) = \sum_{t=1}^{T}\min_{\mu_t\geq0}D_t(\mu_t)$, we now prove that $\VLR(\mu^*)=\sum_{t=1}^{T}D_t(\mu^*)$.
%If $\mu^*=0$, then $\VLR(\mu^*)=\sum_{t=1}^{T}D_t(\mu^*)$ holds directly.\\
%If $\mu^*>0$, which implies that the budget constraint is binding. We have
%\begin{equation*}
%    0=\nabla \VLR(\mu^*)=B-\E_{v \sim F}\left[ x^*(v,\mu)\big) G\big(x^*(v,\mu)\big) \right]=B-\sum_{t=1}^{T}\rho_t
%\end{equation*}
%Thus, we have that
%\[
%B\cdot\mu^*=\sum_{t=1}^{T}\rho_t\cdot\mu^*
%\]
%It follows immediately that $\VLR(\mu^*)=\sum_{t=1}^{T}D_t(\mu^*)$.
Note that
\begin{equation*}
\begin{aligned}
& \VLR(\mu^*) - \sum_{t=1}^{T}D_t(\mu^*) = \left(B - \sum_{t=1}^{T}\rho_t\right) \mu^* \\
= & \left\{B - \sum_{t=1}^{T} \E_{v \sim F_t}\left[ x^*(v,\mu) G\big(x^*(v,\mu)\big) \right] \right\} \mu^* \\
= & \nabla \VLR(\mu^*) \cdot \mu^* = 0
\end{aligned}
\end{equation*}
where the second equality follows from the definition of $\rho_t$ and the last equality follows from the optimality condition of $\mu^*$ that minimizes the function $\VLR(\mu)$.

\section{Proof of \Cref{thrm:informative}}

The proof of \Cref{thrm:informative} is similar to the proof of \Cref{thrm:noninformative}, and the only difference is we substitute $\rho$ for $\hat{\rho}_t$ in this section. We consider the upper bound on term $(a)$ and $(b)$ respectively and show that $(a) \leq O(\sqrt{T\ln T} + V_T)$ (\Cref{appx:subsec:term-a-informative}) and $(b) \leq O(\sqrt{T} + V_T)$ (\Cref{appx:subsec:term-b-informative}), with which we prove the result.

\subsection{Upper Bound on Term $(a)$}\label{appx:subsec:term-a-informative}

Similar to \eqref{eq:tmp7}, we have
\begin{equation}\label{eq:tmp7Info}
\begin{aligned}
(a) \leq \ & \VOPT - \E\left[\sum_{t=1}^T f^*(v_t, \muR_t) + \muR_t \hat{\rho}_t \right]\\
& + \E\left[\sum_{t=1}^T \muR_t (\hat{\rho}_t - \xR_t G(\xR_t))\right]\\
& + 2b \cdot \E\left[\sum_{t=1}^{T} err_G(t)\right] \\
\leq \ & \underbrace{\VOPT - \E\left[\sum_{t=1}^T f^*(v_t, \muR_t) + \muR_t \hat{\rho}_t \right]}_{(c)}\\
& + \underbrace{\E\left[\sum_{t=1}^T \muR (\hat{\rho}_t - \xR_t \mathbf{1}[\xR_t \geq m_t])\right]}_{(d)}\\
& + O(\sqrt{T \ln T})
\end{aligned}
\end{equation}

\paragraph{Upper Bound on Term $(c)$.}
Since $\muR_t$ and $v_t$ are independent, we have $\E[f^*(v_t, \muR_t)] = \E_{\muR_t}\big[ L(\muR_t, F_t) \big]$ with function $L(\mu,F)$ defined in (\ref{eq:def-L}). Thus,
\begin{equation}\label{eq:tmp3Info}
\begin{aligned}
    & \E\left[\sum_{t=1}^T \big(f^*(v_t, \muR_t) + \muR_t \hat{\rho}_t\big) \right]\\
    = \ & \E\left[\sum_{t=1}^T \big(L(\muR_t, F_t) + \muR_t \hat{\rho}_t \big) \right].
\end{aligned}
\end{equation}

On the other hand, note that $\VOPT \leq \VLR(\mu)$ for any $\mu \geq 0$ (Lemma \ref{lemma:weak-duality}). From Lemma \ref{lem:decomposition},
\begin{equation}\label{eq:tmp4Info}
\begin{aligned}
    V^{\text{OPT}}
    & \leq \min_{\mu\geq0}V^{\text{LR}}(\mu) = \sum_{t\in[T]}\min_{\mu_t\geq0}D_t(\mu_t)\\
    & \leq \sum_{t\in[T]} \E\big[ D_t(\muR_t)\big] \\
    & = \E\left[\sum_{t=1}^T \Big(L(\muR_t, F_t) + \muR_t \rho_t \Big) \right]
\end{aligned}
\end{equation}
with $\rho_t$ defined in \eqref{def:rho1}. From (\ref{eq:tmp3Info}) and (\ref{eq:tmp4Info}), we have
\begin{equation}\label{eq:tmp5Info}
\begin{aligned}
    (c) & \leq \sum_{t=1}^{T} \E\big[\muR_t\rho_t-\muR_t\hat{\rho}_t\big]\\
    & \leq(b/a+b)\cdot\sum_{t=1}^{T}\E|\rho_t-\hat{\rho}_t|\\
    & = O\big(V_T\big)
\end{aligned}
\end{equation}
where the last equality follows from the definition of $V_T$ in \eqref{eqn:DeviationRho}.

\paragraph{Upper Bound on Term $(d)$.}
Similar to proof of \Cref{thrm:noninformative}, we have
\begin{equation*}
\begin{aligned}
& \sum_{t=1}^T \muR (\hat{\rho}_t - \xR_t \mathbf{1}[\xR_t \geq m_t])\\
\leq & \frac{b^2}{2} \cdot \eta T + \frac{(\mu_1)^2 - (\muR_{T+1})^2}{2\eta}.
\end{aligned}
\end{equation*}

Therefore, consider Lemma \ref{lemma:bdd-mu}, by taking $\eta = \frac{1}{\sqrt{T}}$,
\begin{equation}\label{eq:tmp6Info}
(d) \leq O(\sqrt{T}).
\end{equation}

From (\ref{eq:tmp7Info}), (\ref{eq:tmp5Info}), and (\ref{eq:tmp6Info}), we have
\begin{equation*}
(a) = O(\sqrt{T \ln T} + V_T).
\end{equation*}

\subsection{Upper Bound on Term $(b)$}\label{appx:subsec:term-b-informative}
Note that $\muR_{t+1} = (\muR_t - \eta g_t)^{+} \geq \muR_t - \eta g_t = \muR_t - \eta (\hat{\rho}_t - \zR_t)$
and $\nabla \VLR(\mu^*) = B - \sum_{t=1}^{T}{\rho}_t \geq 0$ by the optimality condition of $\mu^*$. Therefore, we have
\begin{equation*}
\begin{aligned}
\sum_{t=1}^T \zR_t - B 
&\leq \sum_{t=1}^T \big(\zR_t - \hat{\rho}_t \big) + \sum_{t=1}^T |\hat{\rho}_t - {\rho}_t| \\
&\leq \frac{\muR_{T+1} - \mu_1}{\eta} +  V_T.
\end{aligned}
\end{equation*}

As a result, with $\muR_t \leq \frac{b}{a}+b$ for all $t$ by Lemma \ref{lemma:bdd-mu}, by taking step size $\eta = \frac{1}{\sqrt{T}}$ , we have
\begin{equation*}
(b) = O(\sqrt{T}+V_T).
\end{equation*}

\section{Proof of Proposition \ref{prop:lbd-V}}
The proof of Proposition \ref{prop:lbd-V} is similar to that of Proposition \ref{prop:lbd-W} and it only remains to prove the lower bound of $\Omega(V_T)$. Again, to simplify the proof, here we assume $a=0, b=1$, and denote the reward and consumption at each time period $t$ as $f_t(x_t) = (v_t - x_t) \cdot \mathbf{1}[x_t \geq m_t]$ and $g_t(x_t) = x_t \mathbf{1}[x_t \geq m_t] = z_t$.

Set the budget constraint $B = T/4$. We assume $m_t = \frac{1}{2}$ for each $t$ and offer the prediction of $\rho_t$ as $\hat{\rho}_t = \frac{1}{2}$ when $t$ is odd and $\hat{\rho}_t = 0$ when $t$ is even. We define $\mathbf{1}[x_t \geq \frac{1}{2}]$ as $I(x_t)$. Without loss of generality, we assume $V_T$ is an integer and $V_T \leq T/2$. Now we consider the following two scenario. The first one, given in (\ref{s1}), is that $v_t = \frac{3}{4}$ for $t=1, \cdots, T-V_T$ and $v_t = \frac{7}{8} $ for $t=T+1-V_T, \cdots, T$. The second scenario, given in (\ref{s2}), is that $v_t = \frac{3}{4}$ for $t=1, \cdots, T-V_T$ and $v_t = \frac{5}{8} $ for $t=T+1-V_T, \cdots, T$.

\begin{equation} \label{senario1}
    \begin{aligned}
       \max \quad & \Big(\frac{3}{4} - x_1\Big)I(x_1) + \cdots + \Big(\frac{3}{4} - x_{T-V_T}\Big)I(x_{T-V_T})\\
       & + \Big(\frac{7}{8} - x_{T+1-V_T}\Big)I(x_{T+1-V_T}) + \cdots\\
       & + \Big(\frac{7}{8} - x_T\Big)I(x_T) \\
       \text{s.t.} \quad & \sum_{t=1}^T z_t  \leq \frac{T}{4} , \quad x_t \in [0,1], \quad t=1,\cdots,T.
    \end{aligned}
\end{equation}

\begin{equation} \label{senario2}
    \begin{aligned}
       \max \quad & \Big(\frac{3}{4} - x_1\Big)I(x_1) + \cdots + \Big(\frac{3}{4} - x_{T-V_T}\Big)I(x_{T-V_T}) \\
       & + \Big(\frac{5}{8} - x_{T+1-V_T}\Big)I(x_{T+1-V_T}) + \cdots\\
       & + \Big(\frac{5}{8} - x_T\Big)I(x_T) \\
       \text{s.t.} \quad & \sum_{t=1}^T z_t  \leq \frac{T}{4} , \quad x_t \in [0,1], \quad t=1,\cdots,T.
    \end{aligned}
\end{equation}
where $z_t = x_t \mathbf{1}[x_t \geq \frac{1}{2}]$ for each $t$. In scenario one, we can obtain that the optimal budget allocation $\rho_t^1 = \frac{1}{2}$ when $t \geq T+1-V_T$, while in scenario two, the optimal budget allocation $\rho_t^2 = 0$ when $t \geq T+1-V_T$. For $t \leq T-V_T$, we arrange the rest of budget to minimize $ V_T^1 = \sum_{t=1}^T |\rho_t^1 - \hat{\rho_t}|$ and $V_T^2 = \sum_{t=1}^T |\rho_t^2 - \hat{\rho_t}|$ and we have $V_T^1 = V_T^2 = V_T/2$.

For any online policy, we denote $x_t^1(\pi)$ as the decision of policy $\pi$ at time period $t$ under the scenario given in (\ref{senario1}) and $x_t^2(\pi)$ as the decision of policy $\pi$ at time period $t$ under the scenario given in (\ref{senario2}). Then we define $T_1(\pi)$ and $T_2(\pi)$ as the number of $x_t$ which is no less than $\frac{1}{2}$ of policy $\pi$ under the two scenario during the first $T-V_T$ time periods:
\begin{equation*}
    \begin{aligned}
        & T_1(\pi) = \E \left[\sum_{t=1}^{T-V_T} \mathbf{1}[x_t^1 \geq \frac{1}{2}]\right]\\
        & T_2(\pi) = \E \left[\sum_{t=1}^{T-V_T} \mathbf{1}[x_t^2 \geq \frac{1}{2}]\right]
    \end{aligned} 
\end{equation*}

With budget constraint, we know that $T/2-V_T \leq T_1(\pi) \leq T/2$ and 
$T_2(\pi) \leq T/2$. We can calculate the expected reward collected by policy $\pi$ on both scenario:

\begin{equation*}
    \begin{aligned}
        \text{ALG}_T^1(\pi) & \leq T_1(\pi)\Big(\frac{3}{4} - \frac{1}{2}\Big)\\
        & \quad + \Big(\frac{T}{2} - T_1(\pi)\Big)\Big(\frac{7}{8} - \frac{1}{2}\Big)\\
        & = \frac{3T}{16} - \frac{1}{8} \cdot T_1(\pi) \\
        \text{ALG}_T^2(\pi) & \leq  T_2(\pi)\Big(\frac{3}{4} - \frac{1}{2}\Big)\\
        & \quad + \Big(\frac{T}{2} - T_2(\pi)\Big)\Big(\frac{5}{8} - \frac{1}{2}\Big)\\
        & = \frac{T}{16} + \frac{1}{8} \cdot T_2(\pi)
    \end{aligned}
\end{equation*}

Note that the offline optimal policy $\pi^\star$ who is aware of $v_t$ for each $t$ can achieve the objective value:
\begin{equation*}
    \text{ALG}_T^1(\pi^\star) = \frac{T}{8} + \frac{V_T}{8}, \quad \text{ALG}_T^2(\pi^\star) = \frac{T}{8} 
\end{equation*}

Thus we have the lower bound for regret of policy $\pi$ on scenario (\ref{s1}) and (\ref{s2}) respectively:
\begin{equation*}
    \begin{aligned}
        & \text{regret}_T^1(\pi) \geq \frac{V_T}{8} - \frac{T}{16} + \frac{T_1(\pi)}{8}\\ 
        & \text{regret}_T^2(\pi) \geq \frac{T}{16} - \frac{T_2(\pi)}{8}
    \end{aligned} 
\end{equation*}

Note that the implementation of policy $\pi$ at each time period should be independent of future realizations, we must have $T_1(\pi) = T_2(\pi)$. As a result, for any online policy $\pi$, we have the conclusion:

\begin{equation}
    \begin{aligned}
        \text{regret}(\pi) & \geq \max \left\{ \frac{V_T}{8} - \frac{T}{16} + \frac{T_1(\pi)}{8},  \frac{T}{16} - \frac{T_2(\pi)}{8}\right\}\\
        & \geq \frac{V_T}{16} = \Omega(V_T) 
    \end{aligned}
\end{equation}

\section{Proof of Theorem \ref{thrm:discussion1}}\label{proof5.1}
Similar to \Cref{proof3.1}, we consider the alternative system with $\xR_t \triangleq \argmax_{x \in [a,b]} \big(v_t - (1 + \muR_t) x\big) {G}_t(x)$ and $\muR_{t+1} = (\muR_t - \eta \gR_t)^{+}$ for all periods $t \in [T]$. Following (\ref{eq:tmp1}), we have
\begin{equation}\label{eq:tmp_5.1}
\begin{aligned}
    & \VOPT_{\text{plan}}(\bhrho) - \Vpi\\
    \leq & \underbrace{\VOPT_{\text{plan}}(\bhrho) - \E\left[\sum_{t \in [T]}(v_t - \xR_t) G(\xR_t) \right]}_{(a)}\\
    & + \underbrace{b \cdot \E\left[ \frac{(\sum_{t=1}^T \zR_t - B)^{+}}{a} + 1\right]}_{(b)}.
\end{aligned}
\end{equation}

In the following, we are going to obtain the upper bound for both term $(a)$ and $(b)$.

\subsection{Upper Bound on Term $(a)$}\label{appx:subsec:term-a-discussion}
We consider an ideal process without budget constraint where the true distribution $G$ is known to the decision maker. We define
$\xb_t \triangleq \argmax_{x \in [a,b]} (v_t - (1+\muR_t)x) G(x)$ and $\zb_t = \xb_t \mathbf{1}[\xb_t \geq m_t]$.
We can bound the single-period expected reward as follows:
\begin{equation*}
\begin{aligned}
&(v_t - \xR_t) G(\xR_t) \\
= & \Big(v_t - (1 + \muR_t)\xR_t\Big) {G}_t(\xR_t) + \muR_t \xR_t G(\xR_t)\\
& + \Big(v_t - (1+\muR_t)\xR_t\Big)\Big( G(\xR_t)  - {G}_t(\xR_t)\Big)\\
\geq & \Big(v_t - (1 + \muR_t) \xb_t\Big) {G}_t(\xb_t) + \muR_t \xR_t G(\xR_t)\\
& + \Big(v_t - (1+\muR_t)\xR_t\Big)\Big( G(\xR_t)  - {G}_t(\xR_t)\Big)\\
= & \Big(v_t - (1 + \muR_t) \xb_t\Big) {G}(\xb_t) + \muR_t \xR_t G(\xR_t)\\
& + \Big(v_t - (1 + \muR_t) \xb_t\Big) ({G}_t(\xb_t) - {G}(\xb_t))\\
& + \Big(v_t - (1+\muR_t)\xR_t\Big)\Big( G(\xR_t)  - {G}_t(\xR_t)\Big)\\
\geq & (v_t - \xb_t) {G}(\xb_t)  + \muR_t \Big(\xR_t G(\xR_t) - \xb_t G(\xb_t)\Big)\\
& - 2b \cdot err_G(t)
\end{aligned}
\end{equation*}
where the first inequality follows from the definiton $\xR_t = \text{argmax}_{x \in [a,b]} \big(v_t - (1 + \muR_t) x\big) {G}_t(x)$
and the second inequality follows from the definition of $err_G(t) = \sup_{x \in [a,b]} \big|{G}_{t}(x) - G(x)\big|$.

Therefore,
\begin{equation*}
\begin{aligned}
(a) & \leq \VOPT_{\text{plan}}(\bhrho) - \E\left[\sum_{t=1}^T(v_t - \xb_t) G(\xb_t) \right]\\
& \quad -  \E\left[\sum_{t=1}^T\muR_t \Big(\xR_t G(\xR_t) - \xb_t G(\xb_t)\Big)\right]\\
& \quad + 2b \cdot \E\left[\sum_{t=1}^{T} err_G(t)\right] \\
& \leq \VOPT_{\text{plan}}(\bhrho) - \E\left[\sum_{t=1}^T(v_t - \xb_t) G(\xb_t) \right]\\
& \quad + \E\left[\sum_{t=1}^T\muR_t (\zb_t - \zR_t)\right] + O(\sqrt{T \ln T})
\end{aligned}
\end{equation*}

We define the Lagragian function for each time period as
\begin{equation*}
    L_t(\mu) \triangleq \E\left[\max_{x \in [a,b]} \big(v_t - (1 + \mu) x\big) {G}(x)\right] + \mu \cdot \hrho_t
\end{equation*}

By weak duality, we know that $\VOPT_{\text{plan}}(\bhrho) = \sum_{t=1}^T L_t(\mu^*) \leq \sum_{t=1}^T L_t(\muR_t)$. Thus
\begin{equation*}
\begin{aligned}
    & \VOPT_{\text{plan}}(\bhrho) - \E\left[\sum_{t=1}^T(v_t - \xb_t) G(\xb_t) \right]\\
    \leq \ & \sum_{t=1}^T L_t(\mu^*) - \sum_{t=1}^T L_t(\muR_t) + \sum_{t=1}^T \muR_t (\hrho_t - \E\left[\zb_t\right])\\
    \leq \ & \E\left[\sum_{t=1}^T\muR_t (\hrho_t - \zb_t)\right]
\end{aligned}
\end{equation*}

Therefore,
\begin{equation*}
    (a) \leq \E\left[\sum_{t=1}^T\muR_t (\hrho_t - \zR_t) \right] + O(\sqrt{T \ln T})
\end{equation*}

Since $\muR_{t+1} = (\muR_t - \eta \gR_t)^{+}$, $\gR_t = \hrho_t - \xR_t \mathbf{1}[\xR_t \geq m_t] = \hrho_t - \zR_t$,
we know that
\begin{equation*}
    |\gR_t| \leq \max\{\hrho_t, b - \hrho_t\} \leq b
\end{equation*}

Therefore, we have
\begin{equation*}
(\muR_{t+1})^2 \leq (\muR_{t})^2 + b^2\eta^2  - 2 \eta \muR_t (\hrho_t - \zR_t)
\end{equation*}

By telescoping over $t \in [T]$, we have
\begin{equation*}
\begin{aligned}
& \sum_{t=1}^T \muR_t (\hrho_t - \zR_t)\\
\leq & \frac{b^2}{2} \cdot \eta T + \frac{(\mu_1)^2 - (\muR_{T+1})^2}{2\eta}.
\end{aligned}
\end{equation*}

Since $\muR_t$ is uniformly bounded by Lemma \ref{lemma:bdd-mu}, by setting $\eta = \frac{1}{\sqrt{T}}$, we have
\begin{equation} \label{eq:term-a-5.1}
    (a) \leq O(\sqrt{T}) + O(\sqrt{T \ln T}) \leq O(\sqrt{T \ln T})
\end{equation}

\subsection{Upper Bound on Term $(b)$}\label{appx:subsec:term-b-discussion}
Since $\muR_{t+1} = (\muR_t - \eta g_t)^{+} \geq \muR_t - \eta g_t = \muR_t - \eta (\hrho_t - \zR_t)$, by telescoping over $t$, we have
\begin{equation*}
\sum_{t=1}^T \zR_t - B = \sum_{t=1}^T \left(\zR_t - \hrho_t \right)
\leq \frac{\muR_{T+1} - \mu_1}{\eta} \leq \frac{\frac{b}{a}+b}{\eta}
\end{equation*}

As a result, if we take the step size $\eta = \frac{1}{\sqrt{T}}$,
\begin{equation} \label{eq:term-b-5.1}
(b) \leq O(\sqrt{T}).
\end{equation}

Combining (\ref{eq:tmp_5.1}), (\ref{eq:term-a-5.1}) and (\ref{eq:term-b-5.1}), we have
\begin{equation}
\VOPT_{\text{plan}}(\bhrho) - \Vpi \leq O\left(\sqrt{T\ln T}\right).
\end{equation}

\section{Proof of Theorem \ref{thrm:discussion2}}\label{proof5.2}
When considering the violation term $\bepi$, similar to \Cref{proof5.1}, we have
\begin{equation}\label{eq:tmp_5.2}
\begin{aligned}
    & \VOPT_{\text{plan}}(\bhrho, \bepi) - \Vpi \\
    \leq & \underbrace{\VOPT_{\text{plan}}(\bhrho, \bepi) - \E\left[\sum_{t \in [T]}(v_t - \xR_t) G(\xR_t) \right]}_{(a)}\\
    & + \underbrace{b \cdot \E\left[ \frac{(\sum_{t=1}^T \zR_t - B)^{+}}{a} + 1\right]}_{(b)}.
\end{aligned}
\end{equation}

Therefore, similar to the proof of \Cref{thrm:discussion1}, we have
\begin{equation*}
\begin{aligned}
(a) & \leq \VOPT_{\text{plan}}(\bhrho, \bepi) - \E\left[\sum_{t=1}^T(v_t - \xb_t) G(\xb_t) \right]\\
& \quad -  \E\left[\sum_{t=1}^T\muR_t \Big(\xR_t G(\xR_t) - \xb_t G(\xb_t)\Big)\right] \\
& \quad + 2b \cdot \E\left[\sum_{t=1}^{T} err_G(t)\right] \\
& \leq \VOPT_{\text{plan}}(\bhrho, \bepi) - \E\left[\sum_{t=1}^T(v_t - \xb_t) G(\xb_t) \right]\\
& \quad + \E\left[\sum_{t=1}^T\muR_t (\zb_t - \zR_t)\right] + O(\sqrt{T \ln T})\\
& \leq \E\left[\sum_{t=1}^T\muR_t (\hrho_t - \zb_t)\right] + (\frac{b}{a}+b)\sum_{t=1}^T\epsilon_t \\
& \quad + \E\left[\sum_{t=1}^T\muR_t (\zb_t - \zR_t)\right] + O(\sqrt{T \ln T})\\
& = \E\left[\sum_{t=1}^T\muR_t (\hrho_t - \zR_t)\right] + (\frac{b}{a}+b)\sum_{t=1}^T\epsilon_t\\
& \quad + O(\sqrt{T \ln T})\\
& \leq O(\sqrt{T}) + O\Big(\sum_{t\in [T]}\epsilon_t\Big) + O(\sqrt{T \ln T})\\
& = O\left(\sqrt{T\ln T} + \sum_{t\in [T]}\epsilon_t\right)
\end{aligned}
\end{equation*}

\begin{equation*}
\begin{aligned}
    (b) & \leq \frac{b}{a} \cdot \sum_{t=1}^T \left(\zR_t - \hrho_t + \epsilon_t \right) + b  \\
    & \leq O\left(\sqrt{T} + \sum_{t\in [T]}\epsilon_t\right)
\end{aligned}
\end{equation*}

To conclude, we have
\begin{equation*}
\VOPT_{\text{plan}}(\bhrho, \bepi) - \Vpi \leq O\left(\sqrt{T\ln T} + \sum_{t\in [T]}\epsilon_t\right)
\end{equation*}

%%%%%%%%%%%%%%%%%
\end{document}